\documentclass[11pt,a4paper,final]{article}
\usepackage{amsmath}%
\numberwithin{equation}{section}
\usepackage{amstext}%
\usepackage{amssymb}%
\usepackage{showkeys}%
\usepackage{epsfig}%
\usepackage{graphicx}%
\usepackage{xcolor}
\usepackage{multicol}
\usepackage{amsthm}
\usepackage{booktabs}
\usepackage{apacite}
\usepackage{multirow}
\usepackage[top=1in, bottom=1.25in, left=0.8in, right=0.8in]{geometry}

\theoremstyle{prop}
\theoremstyle{proof}

\newtheorem{prop}{Proposition}
\newtheorem{theorem}{Theorem}
\newtheorem{cor}{Corollary}
\newtheorem{lemma}[theorem]{Lemma}
\newtheorem{ass}{Assumption}

\begin{document}
\title{Derivatives of Risk Measures}
\author{Battulga Gankhuu\footnote{Department of Applied Mathematics, National University of Mongolia; E-mail: battulgag@num.edu.mn; Phone Number: 976--99246036}}
\date{}

\maketitle 

\begin{abstract}
This paper provides the first and second order derivatives of any risk measures, including VaR and ES for continuous and discrete portfolio loss random variable variables. Also, we give asymptotic results of the first and second order conditional moments for heavy--tailed portfolio loss random variable. 
\end{abstract}

\textbf{Keywords:} Risk measure, partial derivative, regular variation, Value--at--Risk, Expected Shortfall.\\[1ex]


\section{Introduction}

Derivatives of risk measures are useful for capital allocation, sensitivity analysis, and portfolio choice problems. Capital allocation means that a risk measure of some investment is represented as a sum over the quantities, which are related to each individual risk source. Understanding of the risk sources can be applied in performance measurement, strategic planning, pricing, and communication with regulators, rating agencies, and security analysts. Overviews of a variety of different methodologies for capital allocation are given in \citeA{Venter06} and \citeA{Guo21}.

The most popular and simple capital allocation principle is the Euler (gradient) allocation principle. The Euler allocation principle requires that a differentiable risk measure must be a positive homogeneous function. In this case, the risk measure is represented in terms of partial derivatives of the risk measure. For formal analysis of the problem of capital allocation, which is based on the Euler allocation principle of a risk measure, appeared in the banking and insurance industry, we refer to  \citeA{Tasche99}, \citeA{Tasche04}, and \citeA{McNeil15}, among others. 

Another important application of the partial derivative of a risk measure is the sensitivity analysis. For example, the following question arises: if we add a new position in a portfolio or if we increase (decrease) a particular position in the portfolio, how does the risk of the entire portfolio change? This question can be answered using the partial derivative of a risk measure.

Also, the partial derivative of a risk measure is useful for the portfolio choice problem. For the portfolio choice problem, if the objective function is the risk measure and a constraint set is convex, then to check the problem has a unique solution, we must check the strict convexity of the risk measure. Consequently, one needs the second order partial derivatives of the risk measure.

The most popular risk measures in practice are Value--at--Risk (VaR) and Expected Shortfall (ES), which are two closely related. Roughly speaking, VaR is a maximum loss of the portfolio loss random variable at a certain confidence level and ES is a mean of the portfolio loss random variable given that the portfolio random variable is greater than or equal to the VaR at the same confidence level. 

The first order partial derivatives of VaR go back to \citeA{Tasche99,Tasche00} and \citeA{Gourieroux00}. They have shown that the first order derivative of VaR equals a conditional expectation of the first order derivative of the portfolio loss random variable given that the portfolio loss random variable equal to the VaR. By using the same method, \citeA{Tasche99,Tasche00} also has shown that the first order derivative of ES equals a conditional expectation of the first order derivative of portfolio loss random variable given that the portfolio loss random variable greater than or equal to the VaR. The second order derivatives of VaR were calculated by \citeA{Gourieroux00}. The all results of \citeA{Tasche99,Tasche00} and \citeA{Gourieroux00} are relied on continuous portfolio loss random variable. While \citeA{Rau-Bredow03} has introduced the first and second order derivatives of VaR and ES based on discrete portfolio loss random variable. Also, \citeA{Glasserman05} has provided simulation methods that calculate the first order derivatives of VaR and ES for credit portfolio. In this paper, we obtain the first and second order derivatives of any risk measures for both continuous and discrete portfolio loss random variable variables. 

This paper is organized as follows: In Section 2, we give the first and second order derivatives of any risk measures, including VaR and ES for discrete portfolio loss random variable. Section 3 is dedicated to the first and second order derivatives of any risk measures, including VaR and ES for continuous portfolio loss random variable. In Section 4, we provide asymptotic results of the first and second order conditional moments for heavy--tailed portfolio loss random variable. Finally, in Section 4, we summarize the results.

\section{Discrete Portfolio Loss Random Variable}

Let $L:\Omega\times\mathbb{R}^d\to \mathbb{R}$ be a portfolio loss random variable, defined on probability space $(\Omega,\mathcal{F},\mathbb{P})$ and $\varrho: \mathcal{M}\to\mathbb{R}$ be a risk measure, where $\mathcal{M}\subset \mathcal{L}^2(\Omega,\mathcal{F},\mathbb{P})$ is linear space and $\mathcal{L}^2(\Omega,\mathcal{F},\mathbb{P})$ is the space of square integrable random variables. We suppose that there are $d$ assets and for each asset $i=1,\dots,d$, let $L_i$ be a loss random variable of $i$--th asset. Then, a portfolio loss is given by the following linear equation
\begin{equation}\label{2.01}
L(x)=L_1x_1+\dots+L_nx_n,
\end{equation}
where the deterministic variable $x_i$ is a weight of $i$--th asset. In this paper, we assume that the risk measure $\varrho(x):=\varrho(L(x))$ is a continuous function and has second order derivatives with respect to argument $x=(x_1,\dots,x_d)'\in\mathbb{R}^d$. Which means the following Assumption holds
\begin{ass}\label{a.01}
Risk measure $\varrho(x)$ is a continuous function and has second order partial derivatives.
\end{ass}
To obtain partial derivatives of the risk measure $\varrho(x)$ with respect to its individual arguments, we introduce the following notations: $H(x):=L(x)-\varrho(x)$ is the difference between portfolio loss random variable $L(x)$ and the risk measure $\varrho(x)$, for $j=1,\dots,d$, $\tilde{L}_j:=(L_1,\dots,L_{j-1},L_{j+1},\dots,L_d)'\in\mathbb{R}^{[d-1]\times 1}$ is a random vector, which is composed of $L_i$s except $L_j$, and $\tilde{L}_j(x):=\sum_{i\neq j}x_iL_i=L(x)-x_jL_j$ is a difference between the portfolio loss random variable $L(x)$ and a random variable $x_jL_j$. To keep calculations convenient, we assume the following Assumption holds throughout the paper:
\begin{ass}\label{a.02}
Integral and derivative operators can be interchanged.
\end{ass}
For the necessary conditions of the integral and derivative operators to be interchanged in their order, we refer to \citeA{Tasche99}, \citeA{Tasche00}, and \citeA{Klenke13}.

In this Section, we develop partial derivative formulas of the risk measure for discrete loss random vector $(L_1,\dots,L_d)'$. In order to obtain partial derivatives of the risk measure, corresponding to the discrete portfolio loss random variable $L(x)$, we need first order partial derivatives of probabilities $\mathbb{P}[H(x)=0|\tilde{L}_j]$ and $\mathbb{P}[H(x)=0]$ and an expectation $\mathbb{E}\big[L_j1_{\{H(x)\geq 0\}}\big]$ for $j=1,\dots,d$. In the following Lemma, we give the partial derivatives of the probabilities and expectation.

\begin{lemma}\label{l.01}
Let $(L_1,\dots,L_d)'$ be an integrable discrete random vector. Then, for a sufficiently small $\varepsilon>0$ and $i,j=1,\dots,d$, it holds
\begin{equation}\label{2.02}
\frac{\partial}{\partial x_i}\mathbb{P}\big[|H(x)|\leq \varepsilon\big|\tilde{L}_j\big]=0,
\end{equation}
\begin{equation}\label{2.03}
\frac{\partial}{\partial x_i}\mathbb{P}\big[|H(x)|\leq \varepsilon\big]=0,
\end{equation}
\begin{equation}\label{2.04}
\frac{\partial}{\partial x_i}\mathbb{E}\big[L_j1_{\{H(x)\geq 0\}}\big]=0,
\end{equation}
and
\begin{equation}\label{2.05}
\frac{\partial}{\partial x_i}\mathbb{E}\big[L(x)1_{\{H(x)\geq 0\}}\big]=\mathbb{E}\big[L_i1_{\{H(x)\geq 0\}}\big],
\end{equation}
where for a generic event $A\in \mathcal{F}$, $1_A$ is an indicator random variable of the event $A$.
\end{lemma}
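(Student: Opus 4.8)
The plan is to reduce the four identities to the two that carry real content, \eqref{2.02} and \eqref{2.04}, and to establish those by interchanging the $x_i$-derivative with the (conditional) expectation via Assumption~\ref{a.02} and then observing that, for $\mathbb{P}$-almost every fixed $\omega$, the map $x\mapsto L(x,\omega)=\sum_k L_k(\omega)x_k$ is linear, so $x\mapsto H(x,\omega)=L(x,\omega)-\varrho(x)$ is $C^2$ by Assumption~\ref{a.01}, and hence the indicators inside the expectations are locally constant in $x$ near the point of differentiation.

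The reductions come first. Identity \eqref{2.03} follows from \eqref{2.02} by the tower property, $\mathbb{P}[|H(x)|\le\varepsilon]=\mathbb{E}\big[\mathbb{P}[|H(x)|\le\varepsilon\mid\tilde{L}_j]\big]$, differentiating under the outer expectation by Assumption~\ref{a.02}. Identity \eqref{2.05} follows from \eqref{2.04}: since $L(x)=\sum_k x_kL_k$, the product rule gives $\frac{\partial}{\partial x_i}\mathbb{E}[L(x)1_{\{H(x)\ge0\}}]=\mathbb{E}[L_i1_{\{H(x)\ge0\}}]+\sum_k x_k\,\frac{\partial}{\partial x_i}\mathbb{E}[L_k1_{\{H(x)\ge0\}}]$, and every summand on the right vanishes by \eqref{2.04}.

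For \eqref{2.02}, fix the point $x$ at which the derivative is taken. Conditionally on $\tilde{L}_j$ one may write $H(x)=A(x)+x_jL_j$, where $A(x):=\sum_{k\neq j}x_kL_k-\varrho(x)$ is $\tilde{L}_j$-measurable and, as a function of $x$, is $C^2$, while the atoms of $L_j$ given $\tilde{L}_j$ do not move with $x$. Hence for each $\omega$ the function $x\mapsto 1_{\{|H(x,\omega)|\le\varepsilon\}}$ has derivative zero at the chosen point unless $|H(x,\omega)|=\varepsilon$, i.e. unless $\omega\in\{H(x)=\varepsilon\}\cup\{H(x)=-\varepsilon\}$. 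Since $H(x)$ is a discrete random variable with at most countably many atoms, $\mathbb{P}[|H(x)|=\varepsilon]=0$ for all sufficiently small $\varepsilon>0$; therefore $\frac{\partial}{\partial x_i}1_{\{|H(x)|\le\varepsilon\}}=0$ $\mathbb{P}$-a.s., and Assumption~\ref{a.02} gives \eqref{2.02} (and, verbatim, \eqref{2.03}).

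For \eqref{2.04}, by Assumption~\ref{a.02} it suffices to show $\mathbb{E}\big[L_j\,\frac{\partial}{\partial x_i}1_{\{H(x)\ge0\}}\big]=0$. For each $\omega$ the map $x\mapsto 1_{\{H(x,\omega)\ge0\}}$ is locally constant near the point of differentiation whenever $H(x,\omega)\neq0$, so the only contribution comes from the event $\{H(x)=0\}=\{L(x)=\varrho(x)\}$. This is the step I expect to require the most care, because (unlike the continuous case of Section~3, and unlike the case when $\varrho$ is ES, where this event is already $\mathbb{P}$-null) it can have strictly positive mass---notably when $\varrho$ is VaR. Enumerating the atoms $v^{(m)}$ of $(L_1,\dots,L_d)$ with weights $p_m$ and setting $\phi_m(y):=\langle v^{(m)},y\rangle-\varrho(y)$, the expectation restricts to $\sum_{m:\,\phi_m(x)=0}v_j^{(m)}p_m\,1_{\{\phi_m\ge0\}}$, and one must show this sum has no first-order variation at $x$. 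The $\phi_m$ differ from one another only by the linear maps $y\mapsto\langle v^{(m)}-v^{(m')},y\rangle$ and share the common second-order part coming from $-\varrho$, so the plan is to use the differentiability hypothesis of Assumption~\ref{a.01} to rule out a critical atom crossing $\{H=0\}$ transversally: differentiability forces $\nabla\varrho(x)$ to agree with the critical atom value(s) $v^{(m)}$, which makes each $1_{\{\phi_m\ge0\}}$ locally constant and the sum stationary. Granting this, $\frac{\partial}{\partial x_i}1_{\{H(x)\ge0\}}=0$ effectively $\mathbb{P}$-a.s., \eqref{2.04} holds, and \eqref{2.05} follows from the reduction above; everything else is the routine interchange of differentiation and expectation licensed by Assumption~\ref{a.02}.
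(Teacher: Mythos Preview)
Your reductions---deriving \eqref{2.03} from \eqref{2.02} by the tower property, and \eqref{2.05} from \eqref{2.04} by the product rule on $L(x)=\sum_k x_kL_k$---match the paper exactly. Your argument for \eqref{2.02} is also essentially the paper's: you differentiate the indicator pointwise and note it is locally constant off the null set $\{|H(x)|=\varepsilon\}$, while the paper computes the difference quotient and shows it is eventually zero; these are two phrasings of the same observation that discrete atoms are isolated and $x\mapsto H(x,\omega)$ is continuous.

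The genuine gap is in \eqref{2.04}. You correctly flag that the pointwise derivative of $1_{\{H(x)\ge0\}}$ is problematic on $\{H(x)=0\}$, which can carry positive mass, and you then try to close the gap by arguing that Assumption~\ref{a.01} forces $\nabla\varrho(x)$ to coincide with the critical atom value(s) $v^{(m)}$, making each $1_{\{\phi_m\ge0\}}$ locally constant. This step does not go through. For a general $\varrho$ satisfying Assumption~\ref{a.01} there is no reason $\nabla\varrho(x)$ should equal any particular atom (take $\varrho\equiv0$ or $\varrho(x)=\mathbb{E}[L(x)]$), and if several atoms lie on the boundary it cannot equal all of them. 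More seriously, the assertion ``$\nabla\varrho(x)$ agrees with the critical atom'' is essentially the conclusion of Proposition~\ref{p.01}, which the paper derives \emph{from} Lemma~\ref{l.01}, so your argument is circular.

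The paper never attempts to differentiate $1_{\{H(x)\ge0\}}$ across its discontinuity. Instead it enumerates the values $h_k(x)$ of $H(x)$ and first proves the stronger fact
\[
\frac{\partial}{\partial x_i}\mathbb{P}\big[|H(x)-h_k(x)|\le\varepsilon\,\big|\,\tilde{L}_j\big]=0
\]
for \emph{every} atom $h_k$, not just $h_k=0$: since both $H$ and $h_k$ shift by the same continuous amount under a perturbation of $x$, the event $\{|H(x)-h_k(x)|\le\varepsilon\}$ is unchanged for large $m$ in the difference quotient. Then $\mathbb{P}[H(x)\ge0\mid\tilde{L}_k]$ is written as a sum $\sum_m\mathbb{P}[|H(x)-h_m^+(x)|\le\varepsilon\mid\tilde{L}_k]$ over the nonnegative atoms, each term having zero $x_i$-derivative, and \eqref{2.04} follows via the tower property. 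The device that you are missing is this atom-by-atom bookkeeping: by letting the threshold $h_k(x)$ move with $x$ along with $H(x)$, the paper converts each summand into a quantity that is exactly locally constant, rather than trying to control a single indicator at a point where it jumps.
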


\begin{proof}
Let us assume that $k=1,2,\dots$, the random variable $H(x)$ takes values $h_k(x)$, including zero, namely, $\mathbb{P}\big[H(x)=h_k(x)\big|\tilde{L}_k\big]>0$. Also, we suppose that for each $i=1,\dots,d$, loss random variable $L_i$ takes values $\ell_{ik}$ for $k=1,2,\dots$.  We consider a difference quotient for a probability $\mathbb{P}\big[|H(x)-h_k|\leq \varepsilon\big|\tilde{L}_j\big]$, that is,
\begin{equation}\label{2.06}
m\bigg(\mathbb{P}\bigg[\bigg|H\big(x+\frac{1}{m}e_i\bigg)g-h_k\bigg(x+\frac{1}{m}e_i\bigg)\bigg|\leq \varepsilon\bigg|\tilde{L}_j\bigg]-\mathbb{P}\big[|H(x)-h_k|\leq\varepsilon\big|\tilde{L}_j\big]\bigg),
\end{equation}
where $e_i\in\mathbb{R}^d$ is a unit vector, whose $i$--th component equals 1 and others zero. Because the risk measure $\varrho(x)$ is a continuous function at the point $x$, there is a $\varepsilon>\delta>0$ and $m(\delta)>0$ such that for all $m>m(\delta)$,
\begin{equation}\label{2.07}
\bigg|\varrho\bigg(x+\frac{1}{m}e_i\bigg)-\varrho(x)-\frac{1}{m}L_i\bigg|<\delta/2
\end{equation}
and
\begin{equation}\label{2.07}
\bigg|\varrho\bigg(x+\frac{1}{m}e_i\bigg)-\varrho(x)-\frac{1}{m}\ell_{ik}\bigg|<\delta/2
\end{equation}
As a result, since $H\big(x+\frac{1}{m}e_i\big)=H(x)-\big\{\varrho\big(x+\frac{1}{m}e_i\big)-\varrho(x)-\frac{1}{m}L_i\big\}$, $h_k\big(x+\frac{1}{m}e_i\big)=h_k(x)-\big\{\varrho\big(x+\frac{1}{m}e_i\big)-\varrho(x)-\frac{1}{m}\ell_{ik}\big\}$, and 
\begin{equation}\label{ad.01}
\mathbb{P}\big[H(x)=h_k\big|\tilde{L}_j\big]=\mathbb{P}\big[|H(x)-h_k|\leq \varepsilon\big|\tilde{L}_j\big]=\mathbb{P}\big[|H(x)-h_k|\leq \varepsilon+\delta\big|\tilde{L}_j\big],
\end{equation}
for all $m>m(\delta)$, the difference quotients are equal to zero. Hence we have that 
\begin{equation}\label{ad.02}
\frac{\partial}{\partial x_i}\mathbb{P}\big[|H(x)-h_k|\leq \varepsilon\big|\tilde{L}_j\big]=0.
\end{equation}
That means equation \eqref{2.02} holds. For equation \eqref{2.03}, by the iterated expectation formula, we get that
\begin{equation}\label{2.08}
\frac{\partial}{\partial x_i}\mathbb{P}\big[|H(x)|\leq \varepsilon\big]=\mathbb{E}\bigg[\frac{\partial}{\partial x_i}\mathbb{P}\big[|H(x)|\leq \varepsilon\big|\tilde{L}_j\big]\bigg]=0.
\end{equation}
Thus, equation \eqref{2.03} is true. Due to the iterated expectation formula, for $j\neq k$, we have
\begin{equation}\label{2.09}
\frac{\partial}{\partial x_i}\mathbb{E}\big[L_j1_{\{H(x)\geq 0\}}\big]=\mathbb{E}\bigg[L_j\frac{\partial}{\partial x_i}\mathbb{P}[H(x)\geq 0|\tilde{L}_k]\bigg].
\end{equation}
Let $h_k^+\geq 0$ be nonnegative values of $H(x)$. Then, a probability $\mathbb{P}\big[H(x)\geq 0\big|\tilde{L}_k\big]$ is represented by
\begin{equation}\label{ad.03}
\mathbb{P}\big[H(x)\geq 0\big|\tilde{L}_k\big]=\sum_{k=1}^\infty\mathbb{P}\big[|H(x)-h_k^+|\leq \varepsilon\big|\tilde{L}_k\big].
\end{equation}
According to equation \eqref{ad.02}, we find that
\begin{equation}\label{2.10}
\frac{\partial}{\partial x_i}\mathbb{P}[H(x)\geq 0|\tilde{L}_k]=0.
\end{equation}
As a result, it follows from equations \eqref{2.09} and \eqref{2.10} that equation \eqref{2.04} holds. As
\begin{equation}\label{2.11}
\mathbb{E}\big[L(x)1_{\{H(x)\geq 0\}}\big]=\sum_{j=1}^d\mathbb{E}\big[L_j1_{\{H(x)\geq 0\}}\big]x_j,
\end{equation}
a partial derivative of the above equation with respect to the argument $x_i$ is obtained by
\begin{equation}\label{2.12}
\frac{\partial}{\partial x_i}\mathbb{E}\big[L(x)1_{\{H(x)\geq 0\}}\big]=\sum_{j=1}^d\frac{\partial}{\partial x_i}\mathbb{E}\big[L_j1_{\{H(x)\geq 0\}}\big]x_j+\mathbb{E}\big[L_i1_{\{H(x)\geq 0\}}\big].
\end{equation}
Consequently, from equations \eqref{2.04} and \eqref{2.12}, one gets equation \eqref{2.05}.
\end{proof}

Since $\mathbb{E}\big[H(x)1_{\{H(x)=0\}}\big]=0$, it holds
\begin{equation}\label{2.13}
\mathbb{E}\big[\varrho(x)1_{\{H(x)=0\}}\big]=\sum_{j=1}^d\mathbb{E}\big[L_j1_{\{H(x)=0\}}\big]x_j
\end{equation}
If we take a partial derivative for both sides of above equation with respect to the argument $x_i$, then since by Lemma \ref{l.01}, for sufficiently small $\varepsilon>0$, $\frac{\partial }{\partial x_i}\mathbb{P}[H(x)=0]=\frac{\partial }{\partial x_i}\mathbb{P}[|H(x)|\leq \varepsilon]=0$, we have that
\begin{equation}\label{2.14}
\mathbb{E}\bigg[\frac{\partial \varrho(x)}{\partial x_i}1_{\{H(x)=0\}}\bigg]=\sum_{j=1}^d\frac{\partial}{\partial x_i}\mathbb{E}\big[L_j1_{\{H(x)=0\}}\big]x_j+\mathbb{E}\big[L_i1_{\{H(x)=0\}}\big].
\end{equation}
From equation \eqref{2.14}, one can conclude that if the first term of the right--hand side of the equation equals zero, that is, $\sum_{j=1}^d\frac{\partial}{\partial x_i}\mathbb{E}\big[L_j1_{\{H(x)\}}\big]x_j=0$, then $\mathbb{E}\big[\frac{\partial H(x)}{\partial x_i}1_{\{H(x)=0\}}\big]=0$. In the following Proposition, we give results, which deal with the right--hand side of equation \eqref{2.14}.

\begin{prop}\label{p.01}
Let $(L_1,\dots,L_d)'$ be an integrable discrete random vector and $\mathbb{P}\big[H(x)=0\big]>0$. Then, for $k,i=1,\dots,d$, it holds
\begin{equation}\label{2.15}
\frac{\partial \varrho(x)}{\partial x_i}=\mathbb{E}\big[L_i\big|H(x)=0\big]
\end{equation}
and
\begin{equation}\label{2.16}
\frac{\partial^2 \varrho(x)}{\partial x_k\partial x_i}=0.
\end{equation}
\end{prop}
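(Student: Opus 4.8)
The plan is to read off both identities from equation \eqref{2.14}, which already expresses $\mathbb{E}\big[\partial_{x_i}\varrho(x)\,1_{\{H(x)=0\}}\big]$ in terms of $\sum_{j=1}^d\partial_{x_i}\mathbb{E}\big[L_j1_{\{H(x)=0\}}\big]x_j$ and $\mathbb{E}\big[L_i1_{\{H(x)=0\}}\big]$. The first task is to show that the sum vanishes. Since $\mathbb{P}[H(x)=0]>0$, the value $0$ is an atom of the discrete variable $H(x)$, so for sufficiently small $\varepsilon>0$ one has $\{H(x)=0\}=\{|H(x)|\le\varepsilon\}$ pathwise; hence equation \eqref{2.02} (equivalently \eqref{ad.02} with $h_k=0$) gives $\partial_{x_i}\mathbb{P}\big[H(x)=0\,\big|\,\tilde{L}_\ell\big]=0$ for every index $\ell$. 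Now fix $j$ and choose $\ell\neq j$: then $L_j$ is $\sigma(\tilde{L}_\ell)$--measurable, so the tower property yields $\mathbb{E}\big[L_j1_{\{H(x)=0\}}\big]=\mathbb{E}\big[L_j\,\mathbb{P}[H(x)=0\,|\,\tilde{L}_\ell]\big]$, and interchanging the derivative with the expectation by Assumption \ref{a.02} gives $\partial_{x_i}\mathbb{E}\big[L_j1_{\{H(x)=0\}}\big]=\mathbb{E}\big[L_j\,\partial_{x_i}\mathbb{P}[H(x)=0\,|\,\tilde{L}_\ell]\big]=0$. Consequently $\sum_{j=1}^d\partial_{x_i}\mathbb{E}\big[L_j1_{\{H(x)=0\}}\big]x_j=0$ and \eqref{2.14} reduces to $\mathbb{E}\big[\partial_{x_i}\varrho(x)\,1_{\{H(x)=0\}}\big]=\mathbb{E}\big[L_i1_{\{H(x)=0\}}\big]$.

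The next step is immediate. The quantity $\partial_{x_i}\varrho(x)$ is deterministic, so it factors out of the expectation on the left, giving $\partial_{x_i}\varrho(x)\,\mathbb{P}[H(x)=0]=\mathbb{E}\big[L_i1_{\{H(x)=0\}}\big]$. Dividing both sides by $\mathbb{P}[H(x)=0]>0$ produces \eqref{2.15}, since the right--hand side is by definition $\mathbb{E}\big[L_i\,\big|\,H(x)=0\big]$.

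For the second--order statement \eqref{2.16}, I would differentiate the identity $\partial_{x_i}\varrho(x)\,\mathbb{P}[H(x)=0]=\mathbb{E}\big[L_i1_{\{H(x)=0\}}\big]$ once more with respect to $x_k$, which is legitimate by Assumption \ref{a.01}. On the left the product rule gives $\partial^2_{x_kx_i}\varrho(x)\,\mathbb{P}[H(x)=0]+\partial_{x_i}\varrho(x)\,\partial_{x_k}\mathbb{P}[H(x)=0]$, and the second term is zero by \eqref{2.03} (again using $\{H(x)=0\}=\{|H(x)|\le\varepsilon\}$ for small $\varepsilon$). On the right, $\partial_{x_k}\mathbb{E}\big[L_i1_{\{H(x)=0\}}\big]=0$ by exactly the argument of the first paragraph, now conditioning on $\tilde{L}_\ell$ with $\ell\neq i$ so that $L_i$ is measurable. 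Hence $\partial^2_{x_kx_i}\varrho(x)\,\mathbb{P}[H(x)=0]=0$, and dividing by the positive probability yields \eqref{2.16}.

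The only delicate point — the rest being elementary manipulation with the deterministic factor $\varrho(x)$ and the positivity of $\mathbb{P}[H(x)=0]$ — is the vanishing of $\partial_{x_i}\mathbb{E}\big[L_j1_{\{H(x)=0\}}\big]$. This requires three ingredients in sequence: replacing the atom event $\{H(x)=0\}$ by the band $\{|H(x)|\le\varepsilon\}$ so that Lemma \ref{l.01} becomes applicable, securing the measurability of $L_j$ with respect to the conditioning $\sigma$--algebra by picking an index different from $j$ (which is where $d\ge 2$ is used), and finally invoking Assumption \ref{a.02} to move the derivative past the expectation. Once this is established the two formulas follow with no further effort.
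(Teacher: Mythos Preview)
Your proof is correct and follows essentially the same route as the paper: both establish $\partial_{x_i}\mathbb{E}\big[L_j1_{\{H(x)=0\}}\big]=0$ by replacing the atom with the band $\{|H(x)|\le\varepsilon\}$, conditioning on $\tilde{L}_\ell$ with $\ell\neq j$, and invoking Lemma~\ref{l.01}, then feed this into \eqref{2.14} and divide by $\mathbb{P}[H(x)=0]>0$. The only cosmetic difference is that for the second derivative you differentiate the already-factored identity $\partial_{x_i}\varrho(x)\,\mathbb{P}[H(x)=0]=\mathbb{E}[L_i1_{\{H(x)=0\}}]$ and apply the product rule, whereas the paper differentiates inside the expectation $\mathbb{E}\big[\partial_{x_i}\varrho(x)\,1_{\{H(x)=0\}}\big]$ directly; the substance is identical.
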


\begin{proof}
Since for $k\neq j$ and sufficiently small $\varepsilon>0$, by the iterated expectation formula,
\begin{equation}\label{2.17}
\frac{\partial}{\partial x_i}\mathbb{E}\big[L_j1_{\{H(x)=0\}}\big]=\frac{\partial}{\partial x_i}\mathbb{E}\big[L_j1_{\{|H(x)|\leq \varepsilon\}}\big]=\mathbb{E}\bigg[L_j\frac{\partial}{\partial x_i}\mathbb{P}\big[|H(x)|\leq \varepsilon\big|\tilde{L}_k\big]\bigg],
\end{equation}
according to equation \eqref{2.02} in Lemma \ref{l.01}, we find that
\begin{equation}\label{2.18}
\frac{\partial}{\partial x_i}\mathbb{E}\big[L_j1_{\{H(x)=0\}}\big]=0~~~\text{for}~i,j=1,\dots,d.
\end{equation}
Substituting equation \eqref{2.18} into equation \eqref{2.14}, we obtain the following equation
\begin{equation}\label{2.19}
\mathbb{E}\bigg[\frac{\partial \varrho(x)}{\partial x_i}1_{\{H(x)=0\}}\bigg]=\mathbb{E}\big[L_i1_{\{H(x)=0\}}\big].
\end{equation}
for $i\neq j$, $i,j=1,\dots,d$.
For the second order partial derivative, by Lemma \ref{l.02} and monotone convergence theorem, we have
\begin{equation}\label{2.20}
\mathbb{E}\bigg[\frac{\partial^2 \varrho(x)}{\partial x_k\partial x_i}1_{\{H(x)=0\}}\bigg]=\frac{\partial}{\partial x_k}\mathbb{E}\bigg[\frac{\partial \varrho(x)}{\partial x_i}1_{\{H(x)=0\}}\bigg]=\frac{\partial}{\partial x_k}\mathbb{E}\big[L_i1_{\{H(x)=0\}}\big]=0
\end{equation}
for $k,i=1,\dots,d$. 
Since $\mathbb{P}[H(x)=0]>0$, if we divide the both sides of equations \eqref{2.19} and \eqref{2.20} by $\mathbb{P}[H(x)=0]$, then we obtain equations \eqref{2.15} and \eqref{2.16}. That completes the proof of the Proposition.
\end{proof}

It should be noted that it follows from equation \eqref{2.16} that higher order partial derivatives of the risk measure $\varrho(x)$ are equal to zero. 

It seems that the first and second partial derivatives of VaR and ES for discrete portfolio loss random variable are first explored in \citeA{Rau-Bredow03}. However, it is worth mentioning that Proposition \ref{p.01} not only holds for the VaR and ES but also any risk measures. VaR and ES are two closely related and commonly used risk measures in practice. Mathematically, the risk measures at a confidence level $\alpha\in (0,1)$ are defined by 
\begin{equation}\label{2.21}
\text{VaR}_\alpha(x):=q_\alpha(x)=\inf\big\{t\in\mathbb{R}\big|F_{L(x)}(t)\geq \alpha\big\}
\end{equation}
and
\begin{equation}\label{2.22}
\text{ES}_\alpha(x):=\frac{1}{1-\alpha}\int_{\alpha}^1q_u(x)du,
\end{equation}
respectively, where $q_\alpha(x)$ is the $\alpha$--quantile and $F_{L(x)}(t)$ is a distribution function, respectively, of the portfolio loss random variable $L(x)$. Following the ideas in \citeA{Acerbi02} (see also \citeA{McNeil15}), for a generic portfolio loss random variable $L(x)$, it can be shown that
\begin{equation}\label{2.23}
\mathrm{ES}_\alpha(x)=\frac{1}{1-\alpha}\bigg(\mathbb{E}\big[L(x)1_{\{L(x)\geq q_\alpha(x)\}}\big]-q_\alpha(x)\Big(\mathbb{P}\big[L(x)\geq q_\alpha(x)\big]-(1-\alpha)\Big)\bigg).
\end{equation}
\citeA{Artzner99} introduced four axioms, namely
\begin{itemize}
\item[($i$)] Monotonicity. $L_1\leq L_2$ implies $\varrho(L_1)\leq \varrho(L_2)$.
\item[($ii$)] Translation invariance. For all $t\in \mathbb{R}$, $\varrho(L+t)=\varrho(L)+t$.
\item[($iii$)] Subadditivity. For all $L_1,L_2\in \mathcal{M}$, $\varrho(L_1+L_2)\leq \varrho(L_1)+\varrho(L_2)$. 
\item[($iv$)] Positive homogeneity. For all $\lambda\geq 0$, $\varrho(\lambda L)=\lambda\varrho(L)$.
\end{itemize}
\citeA{Artzner99} suppose that every risk measure should satisfy the four axioms and they refer to risk measure, satisfying the axioms as coherent. It can be shown that ES$_\alpha(x)$ is a coherent risk measure, see \citeA{Acerbi02} (see also \citeA{McNeil15}). By using the definition of the quantile, one can prove that VaR$_\alpha(x)$ satisfies the axioms: monotonicity, translation invariance, and positive homogeneity. But, in general, VaR$_\alpha(x)$ does not satisfy the third axiom subadditivity, see \citeA{McNeil15}. Therefore, it is not a coherent risk measure in general. However, it can be shown that for elliptically distributed loss random vector $(L_1,\dots,L_d)'$, the risk measure VaR$_\alpha(x)$ for $\alpha\in[0.5,1)$ satisfies the subadditivity axiom, see \citeA{McNeil05}.

Now, we consider some results, which deal with partial derivatives of the expected shortfall ES$_\alpha(x)$. If we take the generic risk measure $\varrho(x)$ by the Value--at--Risk, $\varrho(x)=\text{VaR}_\alpha(x)=q_\alpha(x)$, then the following Corollary holds.
\begin{cor}\label{c.01}
Let $(L_1,\dots,L_d)'$ be an integrable discrete random vector. Then, for $i,j=1,\dots,d$, it holds
\begin{equation}\label{2.24}
\frac{\partial}{\partial x_i}\mathrm{ES}_\alpha(x)=\frac{1}{1-\alpha}\bigg(\mathbb{E}\big[L_i1_{\{L(x)\geq q_\alpha(x)\}}\big]-\frac{\partial}{\partial x_i}q_\alpha(x)\Big(\mathbb{P}\big[L(x)\geq q_\alpha(x)\big]-(1-\alpha)\Big)\bigg)
\end{equation}
and
\begin{equation}\label{2.25}
\frac{\partial^2}{\partial x_i \partial x_j}\mathrm{ES}_\alpha(x)=0.
\end{equation}
\end{cor}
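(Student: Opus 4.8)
The plan is to differentiate the Acerbi--Tasche representation \eqref{2.23} term by term. The key observation is that when the generic risk measure is specialized to $\varrho(x)=\mathrm{VaR}_\alpha(x)=q_\alpha(x)$, one has $H(x)=L(x)-q_\alpha(x)$, so that $\{L(x)\geq q_\alpha(x)\}=\{H(x)\geq 0\}$ and Lemma \ref{l.01} and Proposition \ref{p.01} become directly applicable; note also that the hypothesis $\mathbb{P}[H(x)=0]>0$ of Proposition \ref{p.01} holds automatically here, since for a discrete $L(x)$ the $\alpha$--quantile is an atom of $L(x)$.

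For \eqref{2.24}, rewrite \eqref{2.23} as $(1-\alpha)\,\mathrm{ES}_\alpha(x)=\mathbb{E}\big[L(x)1_{\{H(x)\geq 0\}}\big]-q_\alpha(x)\big(\mathbb{P}[H(x)\geq 0]-(1-\alpha)\big)$ and differentiate in $x_i$, using Assumption \ref{a.02} to interchange the derivative with the expectation. The first term contributes $\mathbb{E}[L_i1_{\{H(x)\geq 0\}}]$ by \eqref{2.05}. For the second term apply the product rule; since $\frac{\partial}{\partial x_i}\mathbb{P}[H(x)\geq 0]=0$ — which follows from \eqref{2.10} together with the iterated expectation formula — only the summand $\frac{\partial q_\alpha(x)}{\partial x_i}\big(\mathbb{P}[H(x)\geq 0]-(1-\alpha)\big)$ survives. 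Dividing through by $1-\alpha$ yields \eqref{2.24}.

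For \eqref{2.25}, differentiate \eqref{2.24} once more, in $x_j$. The term $\mathbb{E}[L_i1_{\{H(x)\geq 0\}}]$ differentiates to zero by \eqref{2.04} (with the roles of the indices interchanged). Applying the product rule to the remaining term gives $\frac{\partial^2 q_\alpha(x)}{\partial x_j\partial x_i}\big(\mathbb{P}[H(x)\geq 0]-(1-\alpha)\big)+\frac{\partial q_\alpha(x)}{\partial x_i}\frac{\partial}{\partial x_j}\mathbb{P}[H(x)\geq 0]$; the first summand vanishes by \eqref{2.16} applied with $\varrho=q_\alpha$, and the second vanishes because $\frac{\partial}{\partial x_j}\mathbb{P}[H(x)\geq 0]=0$ as above. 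Hence \eqref{2.25}. The computation is routine; the only points requiring care are the bookkeeping in the product rule, the identification $\{L(x)\geq q_\alpha(x)\}=\{H(x)\geq 0\}$, and checking that the standing hypothesis $\mathbb{P}[H(x)=0]>0$ of Proposition \ref{p.01} is in force. I do not anticipate any genuine obstacle.
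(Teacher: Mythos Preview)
Your proposal is correct and follows essentially the same route as the paper: specialize $\varrho(x)=q_\alpha(x)$, differentiate the Acerbi--Tasche representation \eqref{2.23}, and invoke \eqref{2.05}, \eqref{2.10} (via iterated expectation, i.e.\ \eqref{2.26}), \eqref{2.04}, and \eqref{2.16} in the same places the paper does. Your explicit check that $\mathbb{P}[H(x)=0]>0$ holds because the $\alpha$--quantile of a discrete $L(x)$ is an atom is a nice touch the paper leaves implicit.
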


\begin{proof}
By the iterated expectation formula, it follows from equation \eqref{2.10} that
\begin{equation}\label{2.26}
\frac{\partial}{\partial x_i}\mathbb{P}\big[L(x)\geq q_\alpha(x)\big]=0.
\end{equation}
Therefore, due to equations \eqref{2.05}, \eqref{2.23}, and \eqref{2.26}, one obtains equation \eqref{2.24}. For the second order derivatives of the expected shortfall ES$_\alpha(x)$, by taking account equation \eqref{2.04}, equation \eqref{2.16}, and \eqref{2.26} for equation \eqref{2.24}, one gets equation \eqref{2.25}.
\end{proof}

\section{Continuous Portfolio Loss Random Variable}

In this Section, we consider partial derivatives of risk measures of absolute continuous portfolio loss random variable $L(x)$. We denote for given $\tilde{L}_j$, a conditional density function of the random variable $L_j$ by $f_{L_j|\tilde{L}_j}(t)$ for $j=1,\dots,d$, a density function and cumulative distribution function of the random variable $H(x)$ by $f_{H(x)}(t)$ and $F_{H(x)}(t)$, respectively, and the right tail probability of $H(x)$ by $\bar{F}_{H(x)}(t):=\mathbb{P}[H(x)>t]=1-F_{H(x)}(t)$. We assume that second order partial derivatives of the tail probability exist and are integrable. That is, the following Assumption holds
\begin{ass}\label{a.03}
The second order partial derivatives of the tail probability $\bar{F}_{H(x)}(t)$ with respect to the argument $x$ exist and the partial derivatives are integrable with respect to the argument $t$.
\end{ass}
Now, we give a main Lemma, which plays a major role in calculating partial derivatives of risk measures of absolute continuous portfolio loss random variable $L(x)$. 

\begin{lemma}\label{l.02}
Let for each $j=1,\dots,d$, $g\big(\tilde{L}_j\big)$ be an integrable random variable, where $g:\mathbb{R}^{d-1}\to\mathbb{R}$ is a Borel function. Then, for $j=1,\dots,d$ and $t\in\mathbb{R}$, we have
\begin{equation}\label{3.01}
\frac{1}{|x_j|} \mathbb{E}\bigg[g\big(\tilde{L}_j\big)f_{L_j|\tilde{L}_j}\bigg(\frac{\varrho(x)+t-\tilde{L}_j(x)}{x_j}\bigg)\bigg]=f_{H(x)}(t)\mathbb{E}\big[g\big(\tilde{L}_j\big)\big|H(x)=t\big].
\end{equation}
\end{lemma}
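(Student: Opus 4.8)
The plan is to read both sides of \eqref{3.01} as statements about conditional densities and then pass between the two conditionings $\{\text{condition on }\tilde{L}_j\}$ and $\{\text{condition on }H(x)=t\}$ through the joint law of the pair $\big(\tilde{L}_j,H(x)\big)$. First I would fix $j$ and $x$ with $x_j\neq 0$ and use the decomposition $H(x)=x_jL_j+\tilde{L}_j(x)-\varrho(x)$, in which $\varrho(x)$ is a deterministic number and $\tilde{L}_j(x)=\sum_{i\neq j}x_iL_i$ is $\sigma(\tilde{L}_j)$--measurable. Hence, conditionally on $\tilde{L}_j$, the random variable $H(x)$ is an affine, strictly monotone function of $L_j$ with slope $x_j$. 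Applying the one--dimensional change--of--variables formula for densities to this affine map, a version of the conditional density of $H(x)$ given $\tilde{L}_j$ is
\[
f_{H(x)\mid\tilde{L}_j}(t)=\frac{1}{|x_j|}\,f_{L_j\mid\tilde{L}_j}\!\left(\frac{\varrho(x)+t-\tilde{L}_j(x)}{x_j}\right),
\]
which is exactly $|x_j|^{-1}$ times the argument of the expectation on the left--hand side of \eqref{3.01}. Thus the left--hand side equals $\mathbb{E}\big[g(\tilde{L}_j)\,f_{H(x)\mid\tilde{L}_j}(t)\big]$.

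Next I would disintegrate this expectation along the law $\mu_{\tilde{L}_j}$ of $\tilde{L}_j$,
\[
\mathbb{E}\big[g(\tilde{L}_j)\,f_{H(x)\mid\tilde{L}_j}(t)\big]=\int g(\tilde\ell)\,f_{H(x)\mid\tilde{L}_j=\tilde\ell}(t)\,\mu_{\tilde{L}_j}(d\tilde\ell),
\]
and observe that $f_{H(x)\mid\tilde{L}_j=\tilde\ell}(t)\,\mu_{\tilde{L}_j}(d\tilde\ell)$ is precisely a disintegration, at the slice $H(x)=t$, of the joint law of $\big(\tilde{L}_j,H(x)\big)$. The same object disintegrates the other way along the law of $H(x)$ as $f_{H(x)}(t)$ times the regular conditional law $\mathbb{P}_{\tilde{L}_j\mid H(x)=t}$ of $\tilde{L}_j$ given $H(x)=t$. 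Substituting this gives
\[
\int g(\tilde\ell)\,f_{H(x)\mid\tilde{L}_j=\tilde\ell}(t)\,\mu_{\tilde{L}_j}(d\tilde\ell)=f_{H(x)}(t)\int g(\tilde\ell)\,\mathbb{P}_{\tilde{L}_j\mid H(x)=t}(d\tilde\ell)=f_{H(x)}(t)\,\mathbb{E}\big[g(\tilde{L}_j)\mid H(x)=t\big],
\]
which is the right--hand side of \eqref{3.01}.

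The routine ingredients here — the change of variables, the two disintegrations, and the integrability needed to justify the interchange of integration and conditioning, which is covered by the hypothesis that $g(\tilde{L}_j)$ is integrable together with Assumption \ref{a.02} — are standard. The point that needs care, and which I expect to be the main obstacle, is the middle step when $\tilde{L}_j$ itself need not have a Lebesgue density: one must phrase the identity in terms of regular conditional distributions and an abstract Bayes' rule, namely that $f_{H(x)\mid\tilde{L}_j=\tilde\ell}(t)\,\mu_{\tilde{L}_j}(d\tilde\ell)$ and $f_{H(x)}(t)\,\mathbb{P}_{\tilde{L}_j\mid H(x)=t}(d\tilde\ell)$ are two representations of the same finite measure on $\mathbb{R}^{d-1}$, after which integrating $g$ against it yields the claim. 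Equivalently, one can make this watertight by first verifying the identity for $g=1_B$ and checking that both sides define the same measure in $B$, then extending to general Borel $g$ by the standard monotone--class and linearity argument.
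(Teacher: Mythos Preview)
Your argument is correct: the change--of--variables identification $f_{H(x)\mid\tilde{L}_j}(t)=|x_j|^{-1}f_{L_j\mid\tilde{L}_j}\big((\varrho(x)+t-\tilde{L}_j(x))/x_j\big)$ followed by the abstract Bayes identity $f_{H(x)\mid\tilde{L}_j=\tilde\ell}(t)\,\mu_{\tilde{L}_j}(d\tilde\ell)=f_{H(x)}(t)\,\mathbb{P}_{\tilde{L}_j\mid H(x)=t}(d\tilde\ell)$ is exactly the standard route. The paper does not actually prove the lemma but only cites \citeA{McNeil15}, and your write--up is essentially the argument one finds there (or reconstructs from it); your extra care about $\tilde{L}_j$ not necessarily having a Lebesgue density, handled via regular conditional distributions and a monotone--class extension from indicators, is a welcome refinement rather than a departure.
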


\begin{proof}
See \citeA{McNeil15}.
\end{proof}

It follows from Lemma \ref{l.02} that the following Proposition holds.
\begin{prop}\label{p.02}
Let $(L_1,\dots,L_d)'$ be an integrable jointly continuous random vector. Then, for $i=1,\dots,d$ and $t\in\mathbb{R}$, it holds
\begin{equation}\label{3.02}
\frac{\partial \varrho(x)}{\partial x_i}=\mathbb{E}\big[L_i\big|H(x)=t\big]-\frac{1}{f_{H(x)}(t)}\frac{\partial \bar{F}_{H(x)}(t)}{\partial x_i},
\end{equation}
\begin{equation}\label{3.03}
\frac{\partial \varrho(x)}{\partial x_i}=\mathbb{E}\big[L_i\big|H(x)\geq t(x)\big]-\frac{1}{\bar{F}_{H(x)}(t(x))}\int_{t(x)}^\infty\frac{\partial \bar{F}_{H(x)}(z)}{\partial x_i}dz,
\end{equation}
and
\begin{equation}\label{3.04}
\frac{\partial \varrho(x)}{\partial x_i}=\mathbb{E}\big[L_i\big]-\int_{-\infty}^\infty\frac{\partial \bar{F}_{H(x)}(z)}{\partial x_i}dz.
\end{equation}
\end{prop}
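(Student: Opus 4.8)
The plan is to differentiate the right tail probability $\bar F_{H(x)}(t)$ in the direction $x_i$ after disintegrating over $\tilde L_j$, and then read off the answer via Lemma \ref{l.02}. Fix $i\in\{1,\dots,d\}$ and choose an index $j$ with $x_j\neq 0$ (take $j=i$ when $x_i\neq 0$). Since $L(x)=x_jL_j+\tilde L_j(x)$ and $H(x)=L(x)-\varrho(x)$, the event $\{H(x)>t\}$ equals $\{x_jL_j>\varrho(x)+t-\tilde L_j(x)\}$; writing $s:=s(\tilde L_j)=(\varrho(x)+t-\tilde L_j(x))/x_j$, conditioning on $\tilde L_j$ gives
\[
\bar F_{H(x)}(t)=\mathbb{E}\big[\bar F_{L_j|\tilde L_j}(s)\big]\quad(x_j>0),\qquad
\bar F_{H(x)}(t)=\mathbb{E}\big[F_{L_j|\tilde L_j}(s)\big]\quad(x_j<0),
\]
where $F_{L_j|\tilde L_j}$ and $\bar F_{L_j|\tilde L_j}$ are the conditional distribution and tail functions associated with the density $f_{L_j|\tilde L_j}$. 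By Assumption \ref{a.02} I would differentiate under the expectation; the chain rule produces $f_{L_j|\tilde L_j}(s)\,\partial s/\partial x_i$, and a short computation shows that, whether $i\neq j$ (where $\partial s/\partial x_i=x_j^{-1}(\partial\varrho(x)/\partial x_i-L_i)$) or $i=j$ (where $\partial s/\partial x_j=x_j^{-1}(\partial\varrho(x)/\partial x_j-s)$), one can write $\partial s/\partial x_i=x_j^{-1}\big(\partial\varrho(x)/\partial x_i-Z_i\big)$ with $Z_i$ a Borel function of $\tilde L_j$, namely $Z_i=L_i$ when $i\neq j$ and $Z_i=s$ when $i=j$.

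The observation that fuses the two cases is that on $\{H(x)=t\}=\{L(x)=\varrho(x)+t\}$ one has $L_j=(\varrho(x)+t-\tilde L_j(x))/x_j=s$, so $\mathbb{E}[Z_i\mid H(x)=t]=\mathbb{E}[L_i\mid H(x)=t]$ in all cases. Applying Lemma \ref{l.02} once with $g\equiv 1$ and once with $g=Z_i$, and noting that the sign of $x_j$ in the factor $1/x_j$ precisely compensates for the difference between $F$ and $\bar F$ in the two displays above (and for the $|x_j|$ appearing in the Lemma), I obtain
\[
\frac{\partial\bar F_{H(x)}(t)}{\partial x_i}=f_{H(x)}(t)\Big(\mathbb{E}\big[L_i\mid H(x)=t\big]-\frac{\partial\varrho(x)}{\partial x_i}\Big).
\]
Dividing by $f_{H(x)}(t)$ (wherever it is positive) gives \eqref{3.02}.

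For \eqref{3.03} I would multiply \eqref{3.02} through by $f_{H(x)}(z)$ and integrate in $z$ over $[t(x),\infty)$; since $\int_{t(x)}^\infty f_{H(x)}(z)\,dz=\bar F_{H(x)}(t(x))$ and $\int_{t(x)}^\infty\mathbb{E}[L_i\mid H(x)=z]f_{H(x)}(z)\,dz=\mathbb{E}\big[L_i1_{\{H(x)\geq t(x)\}}\big]$ by the tower property, dividing through by $\bar F_{H(x)}(t(x))$ yields \eqref{3.03}; equation \eqref{3.04} is obtained the same way by integrating over all of $\mathbb{R}$ instead, where $\int_{-\infty}^\infty\partial\bar F_{H(x)}(z)/\partial x_i\,dz$ converges by Assumption \ref{a.03} and $\mathbb{E}[L_i\mid H(x)\geq t(x)]\to\mathbb{E}[L_i]$ by dominated convergence as $t(x)\to-\infty$. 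The main obstacle is the chain-rule bookkeeping: one must check that the term produced by differentiating $1/x_j$ in the case $i=j$ is exactly $f_{H(x)}(t)\,\mathbb{E}[L_j\mid H(x)=t]$ — which is where the identity $s=L_j$ on $\{H(x)=t\}$ is essential — and that the branch $x_j<0$ returns the same formula rather than a sign-flipped one.
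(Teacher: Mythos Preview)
Your proposal is correct and follows essentially the same approach as the paper: disintegrate $\bar F_{H(x)}(t)$ via conditioning on $\tilde L_j$, differentiate under the expectation using Assumption~\ref{a.02}, apply Lemma~\ref{l.02} to identify the two resulting terms, and then integrate in $z$ to pass from \eqref{3.02} to \eqref{3.03} and \eqref{3.04}. The paper simply chooses $j\neq i$ and assumes $x_j>0$ without loss of generality, whereas you go further and verify directly that the cases $i=j$ and $x_j<0$ produce the same formula (your identity $L_j-s=(H(x)-t)/x_j$, hence $\mathbb{E}[s\mid H(x)=t]=\mathbb{E}[L_j\mid H(x)=t]$, is the clean way to close the $i=j$ case); this is a mild strengthening but not a different route.
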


\begin{proof}
Without loss of generality, let us suppose $x_j>0$. By the iterated expectation formula and Lemma \ref{l.02}, one gets that for $j\neq i$,
\begin{eqnarray}\label{3.05}
\frac{\partial F_{H(x)}(t)}{\partial x_i}&=&\frac{\partial}{\partial x_i}\mathbb{E}\bigg[\mathbb{P}\bigg(L_j\leq \frac{\varrho(x)+t-\tilde{L}_j(x)}{x_j}\bigg|\tilde{L}_j\bigg)\bigg]\nonumber\\
&=&\frac{\partial \varrho(x)}{\partial x_i}\frac{1}{x_j}\mathbb{E}\bigg[f_{L_j|\tilde{L}_j}\bigg(\frac{\varrho(x)+t-\tilde{L}_j(x)}{x_j}\bigg)\bigg]-\frac{1}{x_j}\mathbb{E}\bigg[L_if_{L_j|\tilde{L}_j}\bigg(\frac{\varrho(x)+t-\tilde{L}_j(x)}{x_j}\bigg)\bigg]\nonumber\\
&=&\frac{\partial \varrho(x)}{\partial x_i}f_{H(x)}(t)-f_{H(x)}(t)\mathbb{E}\big[L_i\big|H(x)=t\big].
\end{eqnarray}
Thus equation \eqref{3.02} holds. If we integrate the above equation from $t(x)$ to positive infinity and from negative infinity to positive infinity, then we obtain equations \eqref{3.03} and \eqref{3.04}, respectively.
\end{proof}

For the above Proposition, note that $t(x)$ in equation \eqref{3.03} can depend on the argument $x$, while $t$ in equation \eqref{3.02} can not depend on the argument $x$. Henceforth, we will use the notations $t(x)$ and $t$ for the same reason. Since for the risk measure Value--at--Risk $\varrho(x)=q_\alpha(x)$, $F_{H(x)}(0)=\alpha$, we get the well--known formula of the first order derivative of the Value--at--Risk, namely, $\frac{\partial }{\partial x_i}q_\alpha(x)=\mathbb{E}\big[L_i\big|L(x)=q_\alpha(x)\big]$, see \citeA{Tasche99,Tasche00}, \citeA{Gourieroux00}, and \citeA{McNeil15}. As we compare equations \eqref{2.15} and \eqref{3.02}, for absolute continuous portfolio loss random variable, there is an additional term arose. If we ignore the additional terms, equations \eqref{3.02} and \eqref{3.03} look like partial derivatives of VaR$_\alpha(x)$ and ES$_\alpha(x)$, respectively, for the partial derivatives of ES$_\alpha(x)$, see below. Applying the idea in proof of Proposition \ref{p.02}, one can develop formulas of the first order partial derivatives of any risk measure for a nonlinear portfolio loss random variable. It is worth mentioning that equations \eqref{3.02} and \eqref{3.03} are new formulas for the calculation of the first order derivatives of any risk measures. 

Now we consider a case, where the risk measure $\varrho(x)$ is a positive homogeneous function. If we multiply equation \eqref{3.02} by $x_i$ and then sum it for all $i=1,\dots,d$, then we find that
\begin{eqnarray}\label{3.06}
\varrho(x)&=&\sum_{i=1}^d\frac{\partial \varrho(x)}{\partial x_i}x_i=\mathbb{E}\big[L(x)\big|H(x)=t\big]-\frac{1}{f_{H(x)}(t)}\sum_{i=1}^d\frac{\partial \bar{F}_{H(x)}(t)}{\partial x_i}x_i\nonumber\\
&=&\varrho(x)+t-\frac{1}{f_{H(x)}(t)}\sum_{i=1}^d\frac{\partial \bar{F}_{H(x)}(t)}{\partial x_i}x_i.
\end{eqnarray}
As a result, we obtain that
\begin{equation}\label{3.07}
\sum_{i=1}^d\frac{\partial \bar{F}_{H(x)}(t)}{\partial x_i}x_i=tf_{H(x)}(t).
\end{equation}
In special case, corresponding to $t=0$ for the above equation, $F_{H(x)}(0)$ becomes a homogeneous function with degrees of zero. Using the idea in proof of Proposition \ref{p.02}, for a Borel function $h:\mathbb{R}\to\mathbb{R}$, which has derivative and the positive homogeneous risk measure $\varrho(x)$, it can be shown that $h(\varrho(x))$ has the following representation
\begin{equation}\label{3.08}
h(\varrho(x))=h'(\varrho(x))\varrho(x)+\frac{1}{f_{L(x)}(h(\varrho(x)))}\sum_{i=1}^d\frac{\partial \bar{F}_{L(x)}(h(\varrho(x)))}{\partial x_i}x_i.
\end{equation}
If we choose the function $h(t)$ by the logarithm and exponential functions, which are strictly monotone functions, i.e., $h(t)=\ln(t)$ and $h(t)=\exp(t)$, then we have that
\begin{equation}\label{3.09}
\ln(\varrho(x))=1+\frac{1}{f_{L(x)}(\ln(\varrho(x)))}\sum_{i=1}^d\frac{\partial \bar{F}_{L(x)}(\ln(\varrho(x)))}{\partial x_i}x_i
\end{equation}
and
\begin{equation}\label{3.10}
\varrho(x)=1-\frac{1}{\exp(\varrho(x))f_{L(x)}(\exp(\varrho(x)))}\sum_{i=1}^d\frac{\partial \bar{F}_{L(x)}(\exp(\varrho(x)))}{\partial x_i}x_i.
\end{equation}

In the following Proposition, we consider results, which depend on the second order moments of the loss random vector $(L_1,\dots,L_d)'$.

\begin{prop}\label{p.03}
Let $(L_1,\dots,L_d)'$ be an square integrable continuous random vector and for $i,j=1,\dots,d$, $\frac{\partial}{\partial x_i}\mathbb{E}\big[L_j1_{\{H(x)\geq t\}}\big]$ be integrable functions with respect to the argument $t$. Then, for $t\in \mathbb{R}$ and $i,j=1,\dots,d$, it holds
\begin{equation}\label{3.11}
\frac{\partial \varrho(x)}{\partial x_i}=\frac{1}{\mathbb{E}\big[L_j1_{\{H(x)\geq t(x)\}}\big]}\bigg(\mathbb{E}\big[L_iL_j1_{\{H(x)\geq t(x)\}}\big]-\int_{t(x)}^\infty\frac{\partial }{\partial x_i}\mathbb{E}\big[L_j1_{\{H(x)\geq z\}}\big]dz\bigg),
\end{equation}
\begin{equation}\label{3.12}
\frac{\partial}{\partial x_i}\mathbb{E}\big[L_j1_{\{H(x)\geq t\}}\big]=f_{H(x)}(t)\mathrm{Cov}\big[L_i,L_j\big|H(x)=t\big]+\mathbb{E}\big[L_j\big|H(x)=t\big]\frac{\partial \bar{F}_{H(x)}(t)}{\partial x_i},
\end{equation}
\begin{equation}\label{3.13}
\frac{\partial}{\partial x_i}\mathbb{E}\big[L(x)1_{\{H(x)\geq t\}}\big]=\mathbb{E}\big[L_i1_{\{H(x)\geq t\}}\big]+\big(\varrho(x)+t\big)\frac{\partial \bar{F}_{H(x)}(t)}{\partial x_i},
\end{equation}
and
\begin{eqnarray}\label{3.14}
\frac{\partial \varrho(x)}{\partial x_i\partial x_j}&=&\frac{1}{\bar{F}_{H(x)}(t)}\bigg[f_{H(x)}(t)\mathrm{Cov}\big[L_i,L_j\big|H(x)=t\big]\nonumber\\
&+&\frac{1}{f_{H(x)}(t)}\frac{\partial \bar{F}_{H(x)}(t)}{\partial x_i}\frac{\partial \bar{F}_{H(x)}(t)}{\partial x_j}-\int_t^\infty\frac{\partial^2 \bar{F}_{H(x)}(z)}{\partial x_i\partial x_j}dz\bigg].
\end{eqnarray}
\end{prop}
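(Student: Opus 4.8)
The plan is to establish \eqref{3.12} first, then read off \eqref{3.11} and \eqref{3.13} as quick consequences, and obtain \eqref{3.14} by differentiating \eqref{3.03} once more in the remaining variable. Throughout, the key device is the identity $L(x)=\varrho(x)+t$ on the event $\{H(x)=t\}$ -- equivalently $x_mL_m=\varrho(x)+t-\tilde{L}_m(x)$ there -- which makes Lemma \ref{l.02} applicable exactly as in the proof of Proposition \ref{p.02}; and the special case \eqref{3.02}, which I use in the rearranged form $\mathbb{E}[L_i|H(x)=t]-\frac{\partial\varrho(x)}{\partial x_i}=\frac{1}{f_{H(x)}(t)}\frac{\partial\bar{F}_{H(x)}(t)}{\partial x_i}$.

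To prove \eqref{3.12}, fix $i,j$ and, as in Lemma \ref{l.02}, assume $x_m>0$ for the index $m$ on which I condition (the other sign is identical). If $i\neq j$, condition on $\tilde{L}_j$ and write $\{H(x)\geq t\}=\{L_j\geq c(x)\}$ with $c(x)=(\varrho(x)+t-\tilde{L}_j(x))/x_j$, so that $\mathbb{E}[L_j1_{\{H(x)\geq t\}}|\tilde{L}_j]=\int_{c(x)}^\infty y\,f_{L_j|\tilde{L}_j}(y)\,dy$; differentiating under the expectation (Assumption \ref{a.02}) produces $-c(x)f_{L_j|\tilde{L}_j}(c(x))\frac{\partial c(x)}{\partial x_i}$ with $\frac{\partial c(x)}{\partial x_i}=\frac{1}{x_j}\big(\frac{\partial\varrho(x)}{\partial x_i}-L_i\big)$, and since $L_i$ and $\tilde{L}_j(x)$ are Borel functions of $\tilde{L}_j$, applying Lemma \ref{l.02} to $g(\tilde{L}_j)=(\varrho(x)+t-\tilde{L}_j(x))\big(\frac{\partial\varrho(x)}{\partial x_i}-L_i\big)$ and using $\varrho(x)+t-\tilde{L}_j(x)=x_jL_j$ on $\{H(x)=t\}$ yields $\frac{\partial}{\partial x_i}\mathbb{E}[L_j1_{\{H(x)\geq t\}}]=f_{H(x)}(t)\big(\mathbb{E}[L_iL_j|H(x)=t]-\frac{\partial\varrho(x)}{\partial x_i}\mathbb{E}[L_j|H(x)=t]\big)$. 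If $i=j$, conditioning on $\tilde{L}_j$ would introduce a spurious quadratic boundary term, so instead condition on any $\tilde{L}_k$ with $k\neq j$ (possible since $d\geq2$): then $L_j$ factors out, $\mathbb{E}[L_j1_{\{H(x)\geq t\}}|\tilde{L}_k]=L_j\bar{F}_{L_k|\tilde{L}_k}((\varrho(x)+t-\tilde{L}_k(x))/x_k)$, and the same differentiation together with Lemma \ref{l.02} applied to $g(\tilde{L}_k)=L_j\big(\frac{\partial\varrho(x)}{\partial x_i}-L_i\big)$ (legitimate because $i=j\neq k$) gives the identical expression. In both cases, substituting for $\frac{\partial\varrho(x)}{\partial x_i}$ by means of \eqref{3.02} converts $f_{H(x)}(t)\big(\mathbb{E}[L_iL_j|H(x)=t]-\frac{\partial\varrho(x)}{\partial x_i}\mathbb{E}[L_j|H(x)=t]\big)$ into the covariance form \eqref{3.12}.

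Next, \eqref{3.11} follows by integrating \eqref{3.12} in $z$ over $[t(x),\infty)$: by iterated expectations $\int_{t(x)}^\infty f_{H(x)}(z)\mathbb{E}[L_iL_j|H(x)=z]\,dz=\mathbb{E}[L_iL_j1_{\{H(x)\geq t(x)\}}]$ and likewise for $\mathbb{E}[L_j1_{\{H(x)\geq t(x)\}}]$, so the hypothesis that $z\mapsto\frac{\partial}{\partial x_i}\mathbb{E}[L_j1_{\{H(x)\geq z\}}]$ is integrable gives $\int_{t(x)}^\infty\frac{\partial}{\partial x_i}\mathbb{E}[L_j1_{\{H(x)\geq z\}}]\,dz=\mathbb{E}[L_iL_j1_{\{H(x)\geq t(x)\}}]-\frac{\partial\varrho(x)}{\partial x_i}\mathbb{E}[L_j1_{\{H(x)\geq t(x)\}}]$, and solving for $\frac{\partial\varrho(x)}{\partial x_i}$ is exactly \eqref{3.11}. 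For \eqref{3.13}, write $\mathbb{E}[L(x)1_{\{H(x)\geq t\}}]=\sum_{m=1}^dx_m\mathbb{E}[L_m1_{\{H(x)\geq t\}}]$, differentiate termwise, insert \eqref{3.12}, collapse $\sum_mx_m\mathbb{E}[L_iL_m|H(x)=t]=\mathbb{E}[L_iL(x)|H(x)=t]$ and $\sum_mx_m\mathbb{E}[L_m|H(x)=t]=\mathbb{E}[L(x)|H(x)=t]$, and use $L(x)=\varrho(x)+t$ on $\{H(x)=t\}$ together with \eqref{3.02}; the correction terms then assemble into $(\varrho(x)+t)\frac{\partial\bar{F}_{H(x)}(t)}{\partial x_i}$, which is \eqref{3.13}.

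Finally, for \eqref{3.14} take $t(x)\equiv t$ fixed in \eqref{3.03}, that is, $\frac{\partial\varrho(x)}{\partial x_i}=\frac{1}{\bar{F}_{H(x)}(t)}\big(\mathbb{E}[L_i1_{\{H(x)\geq t\}}]-\int_t^\infty\frac{\partial\bar{F}_{H(x)}(z)}{\partial x_i}\,dz\big)$, and differentiate in $x_j$. The quotient rule, \eqref{3.12} for $\frac{\partial}{\partial x_j}\mathbb{E}[L_i1_{\{H(x)\geq t\}}]$, and Assumption \ref{a.03} (to move $\frac{\partial}{\partial x_j}$ inside $\int_t^\infty$) leave $\frac{1}{\bar{F}_{H(x)}(t)}$ multiplied by $f_{H(x)}(t)\mathrm{Cov}(L_i,L_j|H(x)=t)+\big(\mathbb{E}[L_i|H(x)=t]-\frac{\partial\varrho(x)}{\partial x_i}\big)\frac{\partial\bar{F}_{H(x)}(t)}{\partial x_j}-\int_t^\infty\frac{\partial^2\bar{F}_{H(x)}(z)}{\partial x_i\partial x_j}\,dz$, whereupon the rearranged \eqref{3.02} turns the middle summand into $\frac{1}{f_{H(x)}(t)}\frac{\partial\bar{F}_{H(x)}(t)}{\partial x_i}\frac{\partial\bar{F}_{H(x)}(t)}{\partial x_j}$, giving \eqref{3.14}. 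I expect the derivation of \eqref{3.12} to be the main obstacle: one must choose the conditioning variable correctly in the two cases $i\neq j$ and $i=j$ so as not to pick up the spurious boundary term, verify that the relevant $g$ satisfies the hypotheses of Lemma \ref{l.02}, and rigorously justify the differentiations under the expectation sign; once \eqref{3.12} is available, \eqref{3.11}, \eqref{3.13} and \eqref{3.14} are essentially bookkeeping with iterated expectations and with \eqref{3.02}.
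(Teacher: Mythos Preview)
The proposal is correct and follows essentially the same route as the paper: derive the intermediate identity $\frac{\partial}{\partial x_i}\mathbb{E}[L_j1_{\{H(x)\geq t\}}]=f_{H(x)}(t)\big(\mathbb{E}[L_iL_j\mid H(x)=t]-\frac{\partial\varrho(x)}{\partial x_i}\mathbb{E}[L_j\mid H(x)=t]\big)$ via conditioning and Lemma~\ref{l.02}, substitute \eqref{3.02} to get \eqref{3.12}, integrate for \eqref{3.11}, sum over $j$ for \eqref{3.13}, and differentiate \eqref{3.03} for \eqref{3.14}. Your case split $i\ne j$ versus $i=j$ is unnecessary (the paper simply conditions on $\tilde L_k$ with $k\ne i,j$ throughout), and your quotient-rule path to \eqref{3.14} is marginally more direct than the paper's intermediate computations \eqref{3.17}--\eqref{3.18}, but these are cosmetic differences.
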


\begin{proof}
Without loss of generality, we suppose $x_k>0$. According to the iterated expectation formula and Lemma \ref{l.02}, for $k\neq i,j$, the partial derivative $\frac{\partial}{\partial x_i}\mathbb{E}\big[L_j1_{\{H(x)\geq t\}}\big]$ is obtained by
\begin{eqnarray}\label{3.15}
\frac{\partial}{\partial x_i}\mathbb{E}\big[L_j1_{\{H(x)\geq t\}}\big]&=&-\frac{\partial}{\partial x_i}\mathbb{E}\bigg[L_j \mathbb{P}\bigg(L_k\leq \frac{\varrho(x)+t-\tilde{L}_k(x)}{x_k}\bigg|\tilde{L}_k\bigg)\bigg]\nonumber\\
&=&f_{H(x)}(t)\mathbb{E}\big[L_iL_j\big|H(x)=t\big]-\frac{\partial\varrho(x)}{\partial x_i}f_{H(x)}(t) \mathbb{E}\big[L_j\big|H(x)=t\big].
\end{eqnarray}
Integrating equation \eqref{3.15} from $t(x)$ to positive infinity, we get equation \eqref{3.11}. Taking into account equation \eqref{3.02}, we obtain equation \eqref{3.12}. Consequently, it follows from equation \eqref{3.12} that the sum in equation \eqref{2.12}, which is adjusted by the argument $t$ is given by
\begin{eqnarray}\label{3.16}
\sum_{j=1}^d\frac{\partial}{\partial x_i}\mathbb{E}\big[L_j1_{\{H(x)\geq t\}}\big]x_j&=&f_{H(x)}(t)\mathrm{Cov}\big[L_i,L(x)\big|H(x)=t\big]\nonumber\\
&+&\mathbb{E}\big[L(x)\big|H(x)=t\big]\frac{\partial \bar{F}_{H(x)}(t)}{\partial x_i}\nonumber\\
&=&\big(\varrho(x)+t\big)\frac{\partial \bar{F}_{H(x)}(t)}{\partial x_i}.
\end{eqnarray}
As a result, due to equation \eqref{2.12}, which is adjusted by the argument $t$, one obtains equation \eqref{3.13}. By equations \eqref{3.02}, \eqref{3.03}, and \eqref{3.12}, we find that
\begin{eqnarray}\label{3.17}
\frac{\partial}{\partial x_i}\mathbb{E}\big[L_j\big|H(x)\geq t\big]&=&\frac{\partial}{\partial x_i}\bigg\{\frac{1}{\bar{F}_{H(x)}(t)}\mathbb{E}\big[L_j1_{\{H(x)\geq t\}}\big]\bigg\}\nonumber\\
&=&\frac{1}{\bar{F}_{H(x)}(t)}\bigg[f_{H(x)}(t)\mathrm{Cov}\big[L_i,L_j\big|H(x)=t\big]\\
&+&\bigg(\frac{1}{f_{H(x)}(t)}\frac{\partial \bar{F}_{H(x)}(t)}{\partial x_j}-\frac{1}{\bar{F}_{H(x)}(t)}\int_t^\infty\frac{\partial \bar{F}_{H(x)}(z)}{\partial x_j}dz\bigg)\frac{\partial \bar{F}_{H(x)}(t)}{\partial x_i}\bigg]\nonumber.
\end{eqnarray} 
Thus, from equations \eqref{3.03} and \eqref{3.17} and a fact that
\begin{eqnarray}\label{3.18}
\frac{\partial}{\partial x_i}\bigg\{\frac{1}{\bar{F}_{H(x)}(t)}\int_t^\infty\frac{\partial \bar{F}_{H(x)}(z)}{\partial x_j}dz\bigg\}&=&\frac{1}{\bar{F}_{H(x)}(t)}\int_t^\infty\frac{\partial^2 \bar{F}_{H(x)}(z)}{\partial x_i\partial x_j}dz\\
&-&\frac{1}{\bar{F}_{H(x)}^2(t)}\int_t^\infty\frac{\partial \bar{F}_{H(x)}(z)}{\partial x_j}dz\frac{\partial \bar{F}_{H(x)}(t)}{\partial x_i}\nonumber
\end{eqnarray}
we obtain equation \eqref{3.14}. 
\end{proof}

It should be noted that for elliptically distributed loss random vector $(L_1,\dots,L_d)'$, the first and second order derivatives of any risk measure can be easily calculated from the representation of the risk measure, see \citeA{McNeil15}.

Let us consider the following mean--risk measure portfolio choice problem
\begin{equation}\label{3.19}
\begin{cases}
\varrho(x)\longrightarrow \min\\
\text{s.t.}~x'\mu=r_p~\text{and}~x'e=1,
\end{cases}
\end{equation}
where $\mu=(\mathbb{E}[L_1],\dots,\mathbb{E}[L_d])'$ is an expectation vector of the loss random vector, $e=(1,\dots,1)'\in\mathbb{R}^d$ is a vector, consisting of 1, and $r_p$ is a portfolio return, which is defined by investors. Note that if we replace the objective function $\varrho(x)$ with $x'\Sigma x$, then we obtain \citeauthor{Markowitz52}'s \citeyear{Markowitz52} mean--variance portfolio choice problem, where $\Sigma$ is a covariance matrix of the loss random vector. For the mean--risk measure portfolio choice problem, the convexity of the risk measure (objective function) $\varrho(x)$ is a crucial topic. Since the constraint set is a convex set, if the risk measure is a convex function, then the problem becomes a convex optimization problem. Convex optimization problems have nice properties. Thus, one should check that the risk measure is a convex function. One method to check the convexity of the risk measure is that the Hessian matrix of the risk measure is positive semi--definite. For this reason, let us denote the Hessian matrix of the risk measure $\varrho(x)$ by $R(x)$, that is, $R(x):=\big(\frac{\partial^2}{\partial x_i\partial x_j}\varrho(x)\big)_{i,j=1}^d$. Then, the following Corollary holds, which checks the convexity of the risk measure:
\begin{cor}\label{c.02}
Let $(L_1,\dots,L_d)'$ be a square integrable continuous random vector. If at least one of the following conditions hold, then the Hessian $R(x)$ of a risk measure $\varrho(x)$ is positive semi--definite:
\begin{itemize}
\item[(i)] for all $t\in \mathbb{R}$, the Hessian of the tail probability $\bar{F}_{H(x)}(t)$, $\big(\frac{\partial^2}{\partial x_i\partial x_j}\bar{F}_{H(x)}(t)\big)_{i,j=1}^d$ is negative semi--definite,
\item[(ii)] there is a $t\in \mathbb{R}$ such that a matrix
\begin{eqnarray}\label{3.20}
&&\bigg(\frac{1}{\bar{F}_{H(x)}(t)}\bigg[f_{H(x)}(t)\mathrm{Cov}\big[L_i,L_j\big|H(x)=t\big]\nonumber\\
&&+\frac{1}{f_{H(x)}(t)}\frac{\partial \bar{F}_{H(x)}(t)}{\partial x_i}\frac{\partial \bar{F}_{H(x)}(t)}{\partial x_j}-\frac{\partial^2 }{\partial x_i\partial x_j}\mathbb{E}\big[(H(x)-t)^+\big]\bigg]\bigg)_{i,j=1}^d
\end{eqnarray}
is positive semi--definite, where for a real number $x\in\mathbb{R}$, $x^+:=\max(x,0)$ is a maximum of $x$ and 0.
\end{itemize}
\end{cor}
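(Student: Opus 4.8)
The plan is to read off both claims directly from the closed form \eqref{3.14} for the entries of $R(x)$, after first rewriting the integral term there as a second derivative of the stop--loss transform $\mathbb{E}\big[(H(x)-t)^+\big]$.

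First I would record the layer--cake identity $\mathbb{E}\big[(H(x)-t)^+\big]=\int_t^\infty\bar{F}_{H(x)}(z)\,dz$, which holds because $(L_1,\dots,L_d)'$ is integrable, and then differentiate it twice in $x_i$ and $x_j$ under the integral sign; this is exactly what Assumptions \ref{a.02} and \ref{a.03} are there to license, and it gives $\frac{\partial^2}{\partial x_i\partial x_j}\mathbb{E}\big[(H(x)-t)^+\big]=\int_t^\infty\frac{\partial^2\bar{F}_{H(x)}(z)}{\partial x_i\partial x_j}\,dz$. Substituting this into \eqref{3.14} shows that, for every $t\in\mathbb{R}$ at which $f_{H(x)}(t)>0$, the matrix displayed in \eqref{3.20} is literally $R(x)$. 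Claim (ii) is then immediate: if \eqref{3.20} is positive semi--definite for some such $t$, then $R(x)$ is positive semi--definite.

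For claim (i) I would keep the three--term decomposition of \eqref{3.14} and verify that each summand is a positive semi--definite matrix multiplied by a positive scalar. The scalar prefactors $\tfrac{f_{H(x)}(t)}{\bar{F}_{H(x)}(t)}$, $\tfrac{1}{f_{H(x)}(t)\bar{F}_{H(x)}(t)}$, $\tfrac{1}{\bar{F}_{H(x)}(t)}$ are all positive once $t$ is taken in the interior of the support of $H(x)$. The matrix $\big(\mathrm{Cov}[L_i,L_j\,|\,H(x)=t]\big)_{i,j=1}^d$ is a conditional covariance matrix, hence positive semi--definite; $\big(\tfrac{\partial\bar{F}_{H(x)}(t)}{\partial x_i}\tfrac{\partial\bar{F}_{H(x)}(t)}{\partial x_j}\big)_{i,j=1}^d$ is the rank--one outer product $vv'$ with $v=\nabla_x\bar{F}_{H(x)}(t)$, hence positive semi--definite; and, under hypothesis (i), $-\big(\tfrac{\partial^2\bar{F}_{H(x)}(z)}{\partial x_i\partial x_j}\big)_{i,j=1}^d$ is positive semi--definite for every $z$, so its integral over $[t,\infty)$ is too (for fixed $w\in\mathbb{R}^d$ the scalar integrand $-w'(\cdots)w$ is nonnegative pointwise, and the derivative/integral interchange moves $w'\cdot w$ through the integral). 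A nonnegative combination of positive semi--definite matrices is positive semi--definite, so $R(x)\succeq 0$.

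The proof is short; the only delicate points are the double differentiation under the integral sign in the stop--loss identity — handled by Assumptions \ref{a.02} and \ref{a.03} — and the bookkeeping about the parameter $t$: since \eqref{3.14} holds simultaneously for all $t$ while its left side $R(x)$ does not depend on $t$, each of the two hypotheses may be checked at whichever $t$ is most convenient, and positivity of $f_{H(x)}(t)$ and $\bar F_{H(x)}(t)$ is what keeps the scalar prefactors well defined.
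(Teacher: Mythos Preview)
Your argument for (ii) is essentially the paper's: both establish the stop--loss identity $\int_t^\infty\bar F_{H(x)}(z)\,dz=\mathbb{E}\big[(H(x)-t)^+\big]$ (the paper phrases it as integration by parts, you as layer--cake) and then substitute into \eqref{3.14} to identify the matrix \eqref{3.20} with $R(x)$.

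For (i) you take a genuinely different route. The paper does not use \eqref{3.14} at all for this part; it goes back to the simpler representation \eqref{3.04}, differentiates once more in $x_j$ to get $R(x)_{ij}=-\int_{-\infty}^{\infty}\frac{\partial^2\bar F_{H(x)}(z)}{\partial x_i\partial x_j}\,dz$, and concludes directly from the hypothesis that the integrand is negative semi--definite. Your approach via the three--term decomposition of \eqref{3.14} is correct --- the conditional covariance and the rank--one outer product are indeed positive semi--definite, and the integral term is handled by hypothesis (i) --- but it is more laborious and introduces the auxiliary choice of $t$ and the need for $f_{H(x)}(t)>0$, which the paper's argument avoids entirely. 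The payoff of the paper's route is brevity; the payoff of yours is that it makes transparent why (i) and (ii) are two readings of the same formula \eqref{3.14}, rather than coming from separate representations.
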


\begin{proof}
($i$) follows from equation \eqref{3.04}. For $(ii)$, since $L(x)$ is an integrable continuous random variable, by the integration by parts, we have that
\begin{equation}\label{3.21}
\int_t^\infty \bar{F}_{H(x)}(z)dz=\mathbb{E}\big[(H(x)-t)^+\big].
\end{equation}
Consequently, equation \eqref{3.14} implies equation \eqref{3.20}.
\end{proof}

Note that $\big(\mathrm{Cov}\big[L_i,L_j\big|H(x)=t\big]\big)_{i,j=1}^d$ is a positive semi--definite matrix. Therefore, to check the Hessian $R(x)$ is positive semi--definite, one may check the second line matrices of equation \eqref{3.20} is positive semi--definite for some $t\in\mathbb{R}$.

Finally, in the following Corollary, which is a direct consequence of Proposition \ref{p.03}, we give results that deal with partial derivatives of the expected shortfall ES$_\alpha(x)$.

\begin{cor}\label{c.07}
Let $(L_1,\dots,L_d)'$ be a square integrable continuous random vector. Then, for $i,j=1,\dots,d$, it holds
\begin{equation}\label{3.22}
\frac{\partial}{\partial x_i}\mathrm{ES}_\alpha(x)=\frac{1}{1-\alpha}\mathbb{E}\big[L_i1_{\{L(x)\geq q_\alpha(x)\}}\big]=\mathbb{E}\big[L_i\big|L(x)\geq q_\alpha(x)\big]
\end{equation}
and
\begin{equation}\label{3.23}
\frac{\partial^2}{\partial x_i \partial x_j}\mathrm{ES}_\alpha(x)=\frac{1}{1-\alpha}f_{H(x)}(0)\mathrm{Cov}\big[L_i,L_j\big|L(x)=q_\alpha(x)\big].
\end{equation}
\end{cor}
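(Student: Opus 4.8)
The plan is to specialize the generic risk measure in Proposition \ref{p.03} to $\varrho(x)=q_\alpha(x)=\mathrm{VaR}_\alpha(x)$, evaluate everything at the threshold $t=0$, and exploit the fact that for an absolutely continuous loss the tail probability $\bar{F}_{H(x)}(0)$ collapses to the constant $1-\alpha$, which kills precisely the extra terms that separate Proposition \ref{p.03} from the classical expected-shortfall formulas.

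First I would record the preliminary simplification. Since $L(x)$ is continuous, $F_{L(x)}(q_\alpha(x))=\alpha$, hence $\mathbb{P}\big[L(x)\geq q_\alpha(x)\big]=1-\alpha$, so the correction term in \eqref{2.23} vanishes and $\mathrm{ES}_\alpha(x)=\tfrac{1}{1-\alpha}\mathbb{E}\big[L(x)1_{\{L(x)\geq q_\alpha(x)\}}\big]$. Taking $\varrho(x)=q_\alpha(x)$ in the definition $H(x)=L(x)-\varrho(x)$, the event $\{L(x)\geq q_\alpha(x)\}$ equals $\{H(x)\geq 0\}$, $\{L(x)=q_\alpha(x)\}$ equals $\{H(x)=0\}$, and $f_{H(x)}(0)=f_{L(x)}(q_\alpha(x))$. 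The crucial observation is that $\bar{F}_{H(x)}(0)=\mathbb{P}\big[H(x)>0\big]=\mathbb{P}\big[L(x)>q_\alpha(x)\big]=1-\alpha$ does not depend on $x$, so $\frac{\partial}{\partial x_i}\bar{F}_{H(x)}(0)=0$ for every $i=1,\dots,d$.

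Next, for the first derivative I would apply \eqref{3.13} at $t=0$: the term $\big(\varrho(x)+t\big)\frac{\partial}{\partial x_i}\bar{F}_{H(x)}(t)$ drops out by the vanishing derivative just noted, leaving $\frac{\partial}{\partial x_i}\mathbb{E}\big[L(x)1_{\{H(x)\geq 0\}}\big]=\mathbb{E}\big[L_i1_{\{H(x)\geq 0\}}\big]$; dividing by $1-\alpha$ gives \eqref{3.22}, and the conditional-expectation form follows again from $\mathbb{P}\big[L(x)\geq q_\alpha(x)\big]=1-\alpha$. For the second derivative I differentiate once more: apply \eqref{3.12} with $t=0$ (and indices $i,j$ interchanged) to $\frac{\partial}{\partial x_j}\mathbb{E}\big[L_i1_{\{H(x)\geq 0\}}\big]$. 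The summand $\mathbb{E}\big[L_i\big|H(x)=0\big]\frac{\partial}{\partial x_j}\bar{F}_{H(x)}(0)$ vanishes, so only $f_{H(x)}(0)\mathrm{Cov}\big[L_i,L_j\big|H(x)=0\big]$ survives; rewriting $f_{H(x)}(0)=f_{L(x)}(q_\alpha(x))$, $\{H(x)=0\}=\{L(x)=q_\alpha(x)\}$, and dividing by $1-\alpha$ yields \eqref{3.23}.

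There is no genuinely hard step here — the entire content is the bookkeeping remark that equating the generic risk measure with VaR forces $\bar{F}_{H(x)}(0)$ to be constant. The only point that deserves a word of care is the legitimacy of evaluating \eqref{3.12}–\eqref{3.13} at the particular threshold $t=0$ and of interchanging the relevant differentiation and expectation operators; this is covered by Assumption \ref{a.02} together with the integrability hypotheses already imposed in Proposition \ref{p.03}, and I would simply flag that these hypotheses are what licenses the substitution $t=0$.
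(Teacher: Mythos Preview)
Your proposal is correct and follows essentially the same route as the paper: specialize $\varrho(x)=q_\alpha(x)$, use the continuity of $L(x)$ together with \eqref{2.23} to reduce $\mathrm{ES}_\alpha(x)$ to $\tfrac{1}{1-\alpha}\mathbb{E}\big[L(x)1_{\{H(x)\geq 0\}}\big]$, observe that $\bar{F}_{H(x)}(0)=1-\alpha$ forces $\partial_{x_i}\bar{F}_{H(x)}(0)=0$, and then read off \eqref{3.22} from \eqref{3.13} and \eqref{3.23} from \eqref{3.12}. The paper's proof is a two-line compression of exactly this argument.
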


\begin{proof}
According to equations \eqref{2.23} and \eqref{3.13} and the fact that $\frac{\partial}{\partial x_i}\bar{F}_{H(x)}(0)=\frac{\partial}{\partial x_i}\bar{F}_{L(x)}(q_\alpha(x))=0$, we get equation \eqref{3.22}. It follows from equations \eqref{3.12} and \eqref{3.22} that equation \eqref{3.23} holds.
\end{proof}

It should be emphasized that since $\big(\mathrm{Cov}\big[L_i,L_j\big|L(x)=q_\alpha(x)\big]\big)_{i,j=1}^d$ is a positive semi--definite matrix, the risk measure ES$_\alpha(x)$ is a convex function with respect to the argument $x$. If we take $t(x)=q_\alpha(x)-\varrho(x)$ in equation \eqref{3.03}, then equation \eqref{3.22} implies that
\begin{equation}\label{3.24}
\frac{\partial \varrho(x)}{\partial x_i}=\frac{\partial}{\partial x_i}\mathrm{ES}_\alpha(x)-\frac{1}{1-\alpha}\int_{q_\alpha(x)-\varrho(x)}^\infty\frac{\partial \bar{F}_{L(x)}(\varrho(x)+z)}{\partial x_i}dz.
\end{equation}
In particular, if $\varrho(x)=\text{ES}_\alpha(x)$, then we have that
\begin{equation}\label{3.25}
\int_{q_\alpha(x)-\text{ES}_\alpha(x)}^\infty\frac{\partial \bar{F}_{L(x)}(\text{ES}_\alpha(x)+z)}{\partial x_i}dz=0.
\end{equation}

\section{Heavy Tailed Distributions}

Extreme value theory studies the stochastic behavior of the extreme values in a process. In the univariate case, the stochastic behavior of the maxima of independent identically distributed (i.i.d.) random variables can be described by the three extreme value distributions, namely, Fr$\acute{\mathrm{e}}$chet, Weibull, and Gumbel. There are two main methods that model the extreme values. The first method relies on block maxima of the i.i.d. random variables, while the second method relies on peak values above a certain threshold. Because the second method applies more data on extreme outcomes than the first method, the second method is considered most useful in practice.

Let $L_1,L_2,\dots$ be an i.i.d. sequence of the loss random variables. Then, the block maxima of the sequence is defined by
\begin{equation}\label{4.01}
M_n:=\max(L_1,\dots,L_n).
\end{equation}
Because a distribution function of the block maxima converges to a degenerate distribution, we need to seek normalizing constants $c_n>0$ and $d_n\in\mathbb{R}$ such that a limiting distribution of the block maxima is non--degenerate $H(t)$, i.e.,
\begin{equation}\label{4.02}
\lim_{n\to\infty}\mathbb{P}\big[c_n^{-1}(M_n-d_n)\leq t\big]=H(t).
\end{equation}
According the Fisher--Tippett theorem (see \citeA{Embrechts97} and \citeA{McNeil15}), if there exist normalizing constants $c_n>0$ and $d_n\in\mathbb{R}$, and non--degenerate distribution function $H(t)$, then $H(t)$ must be type of one of the following distributions:
\begin{equation}\label{4.03}
\begin{matrix}
\text{Fr$\acute{\mathrm{e}}$chet:} & & \Phi_{\kappa}(t) &=& \begin{cases}
0, & t\leq 0\\
\exp\{-t^{-\kappa}\}, & t>0
\end{cases} & \kappa>0.\\
\text{Weibull:} & & \Psi_{\kappa}(t) &=& \begin{cases}
\exp\{-(-t)^{\kappa}\}, & t\leq 0\\
1, & t>0
\end{cases} & \kappa>0.\\
\text{Gumbel:} & & \Lambda(t) &=& \exp\{-\exp\{-t\}\}, & t\in \mathbb{R}. & \\
\end{matrix}
\end{equation}
The rest of the paper, we consider the Fr$\acute{\mathrm{e}}$chet case, which is commonly used to model tails of financial loss random variables.

\subsection{$t$ Does not Depend on $x$}

In this Subsection, we assume that $t$ does not depend on the variable $x$. Due to the extreme value theory, it is the well--known fact that for a heavy--tailed (Fr$\acute{\mathrm{e}}$chet case) random variable $L(x)$, its tail probability $\bar{F}_{L(x)}(t)$ is represented by
\begin{equation}\label{4.04}
\bar{F}_{L(x)}(t)=t^{-\kappa}\ell_{L(x)}(t),~~~t\to\infty
\end{equation}
for $\kappa>0$, where $\ell_{L(x)}(t)$ is a slowly varying function at $\infty$, that is, it satisfy the following condition 
\begin{equation}\label{4.05}
\lim_{t\to\infty}\frac{\ell_{L(x)}(\lambda t)}{\ell_{L(x)}(t)}=1~~~\text{for all}~\lambda>0,
\end{equation}
see \citeA{Embrechts97} and \citeA{McNeil15}. The parameter $\kappa$ is often referred to as the tail index of the distribution function $F_{L(x)}(t)$. From representation \eqref{4.04} of the tail probability, one may deduce that the tail probability decays like a power function. In this paper, we assume that the density function $f_{L(x)}(t)$ is an ultimately monotone function, that is, there is an interval $(a,\infty)$ for some $a>0$ such that $f_{L(x)}(t)$ is monotone. Under the assumption, by monotone density theorem, the density function $f_{L(x)}(t)$ is represented by 
\begin{equation}\label{4.06}
f_{L(x)}(t)=\kappa t^{-k-1}\ell_{L(x)}(t),~~~t\to\infty,
\end{equation}
see \citeA{Bingham89} and \citeA{Embrechts97}. In the following lemma, we show that the conditional expectation $\mathbb{E}\big[L_i\big|L(x)=t\big]$ is a regular varying function at $\infty$ with index 1.

\begin{prop}\label{p.04}
Let tail probability of a random variable $L(x)$ is represented by $\bar{F}_{L(x)}(t)=t^{-\kappa}\ell_{L(x)}(t)$ for $\kappa>1$, where $\ell_{L(x)}(t)$ is slowly varying function at $\infty$. Then, $\mathbb{E}\big[L_i\big|L(x)=t\big]$ is a regular varying function at $\infty$ with index 1
\end{prop}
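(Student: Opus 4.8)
The plan is to reduce the assertion to a statement about slowly varying functions alone. First I would establish, for each $i$, the identity
\[
\mathbb{E}\big[L_i\big|L(x)=t\big]\,f_{L(x)}(t)=\frac{\partial}{\partial x_i}\bar F_{L(x)}(t),
\]
which is just the computation \eqref{3.05} carried out with $L(x)$ in place of $H(x)$, i.e.\ with the additive term $\varrho(x)$ there taken to be $0$: assuming without loss of generality that $x_k>0$ for some $k\neq i$ --- the remaining degenerate case $L(x)=x_iL_i$ being immediate since then $\mathbb{E}[L_i\mid L(x)=t]=t/x_i$ is already regularly varying with index $1$ --- one conditions on $\tilde L_k$, writes $\bar F_{L(x)}(t)=\mathbb{E}\big[\bar F_{L_k\mid\tilde L_k}\big((t-\tilde L_k(x))/x_k\big)\big]$, differentiates in $x_i$ (Assumption \ref{a.02} licensing the interchange of derivative and expectation), and applies Lemma \ref{l.02} with $g=L_i$.

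I would then insert the heavy--tailed structure. By hypothesis $\bar F_{L(x)}(t)=t^{-\kappa}\ell_{L(x)}(t)$ with $\ell_{L(x)}$ slowly varying, and by the monotone density theorem (equation \eqref{4.06}) $f_{L(x)}(t)=\kappa\,t^{-\kappa-1}\ell_{L(x)}(t)$ as $t\to\infty$; here $\kappa>1$ guarantees that the first tail moments, and hence the conditional expectation above, are finite. Since the tail index $\kappa$ does not depend on $x$, differentiating the first representation gives $\partial_{x_i}\bar F_{L(x)}(t)=t^{-\kappa}\,\partial_{x_i}\ell_{L(x)}(t)$, so dividing by $f_{L(x)}(t)$ yields
\[
\mathbb{E}\big[L_i\big|L(x)=t\big]=\frac{t}{\kappa}\,h_i(t),\qquad h_i(t):=\frac{\partial_{x_i}\ell_{L(x)}(t)}{\ell_{L(x)}(t)}=\partial_{x_i}\ln\ell_{L(x)}(t),\qquad t\to\infty.
\]
It therefore suffices to show that $h_i$ is slowly varying at $\infty$, for then $\mathbb{E}[L_i\mid L(x)=t]$ is $t$ times a slowly varying function, i.e.\ regularly varying with index $1$. (As a consistency check, multiplying this display by $x_i$ and summing over $i$, using $\sum_i x_iL_i=L(x)$ and $\mathbb{E}[L(x)\mid L(x)=t]=t$, forces $\sum_{i=1}^d x_ih_i(t)=\kappa$ for all large $t$.)

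For the remaining step I would begin from the defining relation $\ln\ell_{L(x)}(\lambda t)-\ln\ell_{L(x)}(t)\to 0$ as $t\to\infty$ for each fixed $\lambda>0$, differentiate it in $x_i$ to obtain $h_i(\lambda t)-h_i(t)\to 0$, and then combine this with the boundedness of $h_i$ (which holds a posteriori because $\mathbb{E}[L_i\mid L(x)=t]=O(t)$) and the constraint $\sum_i x_ih_i(t)=\kappa$ to conclude $h_i(\lambda t)/h_i(t)\to 1$. This last step is where I expect the real work, and where an extra regularity hypothesis is genuinely needed: interchanging $\partial_{x_i}$ with the tail limit in the definition of slow variation requires a local--uniformity (Potter--bounds / dominated--convergence) condition on the family $\{\ell_{L(x)}\}$ that is not literally contained in Assumptions \ref{a.02}--\ref{a.03}, and passing from $h_i(\lambda t)-h_i(t)\to 0$ to the ratio limit additionally needs $\partial_{x_i}\ln\ell_{L(x)}$ to be eventually of constant sign. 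A second route I would try in parallel, bypassing the differentiation of slowly varying functions, is to apply Karamata's theorem together with the monotone density theorem to $\mathbb{E}\big[L_i1_{\{L(x)\geq t\}}\big]=\int_t^\infty\mathbb{E}[L_i\mid L(x)=s]\,f_{L(x)}(s)\,ds$: under the hypotheses that this integral is regularly varying with index $1-\kappa$ and that its integrand is ultimately monotone, the monotone density theorem returns that the integrand is regularly varying with index $-\kappa$, and dividing by $f_{L(x)}$, which is regularly varying with index $-\kappa-1$, again gives that $\mathbb{E}[L_i\mid L(x)=t]$ is regularly varying with index $1$.
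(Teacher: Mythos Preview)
Your approach is essentially the paper's: both start from the identity $\mathbb{E}[L_i\mid L(x)=t]\,f_{L(x)}(t)=\partial_{x_i}\bar F_{L(x)}(t)$ (i.e.\ \eqref{3.02} with $\varrho(x)=0$), insert $\bar F_{L(x)}(t)=t^{-\kappa}\ell_{L(x)}(t)$ and the monotone--density form of $f_{L(x)}$, and reduce the question to the behaviour of $\partial_{x_i}\ell_{L(x)}$. The paper then simply writes
\[
\frac{\mathbb{E}[L_i\mid L(x)=\lambda t]}{\mathbb{E}[L_i\mid L(x)=t]}
=\lambda\,\frac{\ell_{L(x)}(t)}{\ell_{L(x)}(\lambda t)}\,
\frac{\partial_{x_i}\ell_{L(x)}(\lambda t)}{\partial_{x_i}\ell_{L(x)}(t)},\qquad t\to\infty,
\]
and invokes \eqref{4.05} to conclude that the limit is $\lambda$.

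The one point worth noting is that the step you single out as ``where the real work is'' is exactly the step the paper takes for granted: the paper's appeal to \eqref{4.05} handles the factor $\ell_{L(x)}(t)/\ell_{L(x)}(\lambda t)$, but the factor $\partial_{x_i}\ell_{L(x)}(\lambda t)/\partial_{x_i}\ell_{L(x)}(t)\to 1$ is asserted rather than derived from slow variation of $\ell_{L(x)}$ alone. So your instinct that an extra regularity hypothesis (interchange of $\partial_{x_i}$ with the $t\to\infty$ limit, or equivalently slow variation of $\partial_{x_i}\ell_{L(x)}$) is needed is well placed; the paper's proof does not supply that argument either. Your alternative Karamata/monotone--density route is not used in the paper, but would be a legitimate way to close the gap under a monotonicity hypothesis on the integrand.
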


\begin{proof}
If we take $\varrho(x)=0$ in equation \eqref{3.02}, then we have that
\begin{equation}\label{4.07}
\frac{\mathbb{E}\big[L_i\big|L(x)=\lambda t\big]}{\mathbb{E}\big[L_i\big|L(x)=t\big]}=\frac{f_{L(x)}(t)}{f_{L(x)}(\lambda t)}\frac{\frac{\partial }{\partial x}\bar{F}_{L(x)}(\lambda t)}{\frac{\partial }{\partial x}\bar{F}_{L(x)}( t)}.
\end{equation}
By substituting equations \eqref{4.04} and \eqref{4.06} into the above equation, one gets that
\begin{equation}\label{4.07}
\frac{\mathbb{E}\big[L_i\big|L(x)=\lambda t\big]}{\mathbb{E}\big[L_i\big|L(x)=t\big]}=\lambda\frac{\ell_{L(x)}(t)}{\ell_{L(x)}(\lambda t)}\frac{\frac{\partial }{\partial x}\ell_{L(x)}(\lambda t)}{\frac{\partial }{\partial x}\ell_{L(x)}(t)},~~~t\to\infty.
\end{equation}
According to equation \eqref{4.05}, one obtains
\begin{equation}\label{4.10}
\lim_{t\to\infty}\frac{\mathbb{E}\big[L_i\big|L(x)=\lambda t\big]}{\mathbb{E}\big[L_i\big|L(x)=t\big]}=\lambda.
\end{equation}
Thus, the conditional expectation $\mathbb{E}\big[L_i\big|L(x)=t\big]$ is a regularly varying function at $\infty$ with index 1. That completes the proof.
\end{proof}

According to Proposition \ref{p.04}, the conditional expectation $\mathbb{E}\big[L_i\big|L(x)=t\big]$ is represented by
\begin{equation}\label{4.11}
\mathbb{E}\big[L_i\big|L(x)=t\big]=t\ell_{L(x)}^i(t),~~~t\to\infty
\end{equation}
for $i=1,\dots,d$, where $\ell_{L(x)}^i(t)$ is a slowly varying function at $\infty$.

\begin{prop}\label{p.05}
Let tail probability of a random variable $L(x)$ is represented by $\bar{F}_{L(x)}(t)=t^{-\kappa}\ell_{L(x)}(t)$ for $\kappa>1$, for $i=1,\dots,d$, conditional expectations $\mathbb{E}\big[L_i\big|L(x)=t\big]$ are represented by $\mathbb{E}\big[L_i\big|L(x)=t\big]=t\ell_{L(x)}^i(t)$, and density $f_{L(x)}(t)$ be an ultimately monotone function, where $\ell_{L(x)}(t)$ and $\ell_{L(x)}^i(t)$ for $i=1,\dots,d$ are slowly varying function at $\infty$. Then, for the slowly varying functions, the following results hold
\begin{equation}\label{4.12}
\ell_{L(x)}^i(t)=\frac{1}{x_1+\dots+x_d},~~~t\to\infty
\end{equation}
for $i=1,\dots,d$ and
\begin{equation}\label{4.13}
\ell_{L(x)}(t)=(x_1+\dots+x_d)^\kappa,~~~t\to\infty.
\end{equation}
Consequently, we have that
\begin{equation}\label{4.14}
\bar{F}_{L(x)}(t)=t^{-\kappa}(x_1+\dots+x_d)^\kappa,~~~t\to\infty
\end{equation}
and
\begin{equation}\label{4.15}
f_{L(x)}(t)=\kappa t^{-\kappa-1}(x_1+\dots+x_d)^\kappa,~~~t\to\infty.
\end{equation}
\end{prop}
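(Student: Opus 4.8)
The plan is to derive, from the first-order derivative formula of Proposition~\ref{p.02}, a single Euler-type partial differential equation for $\ln\ell_{L(x)}(t)$ and then solve it; the remaining identities \eqref{4.12}, \eqref{4.14}, \eqref{4.15} then follow by back-substitution into \eqref{4.06}.

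First I would specialize equation~\eqref{3.02} to $\varrho(x)=0$, so that $H(x)=L(x)$ --- this is exactly the identity already used in the proof of Proposition~\ref{p.04}, namely
\[
\mathbb{E}\big[L_i\,\big|\,L(x)=t\big]=\frac{1}{f_{L(x)}(t)}\,\frac{\partial\bar{F}_{L(x)}(t)}{\partial x_i}.
\]
On the right-hand side I would insert the regular-variation representation $\bar{F}_{L(x)}(t)=t^{-\kappa}\ell_{L(x)}(t)$ together with its $x$-derivative (permitted by Assumption~\ref{a.03}; since $\kappa$ does not depend on $x$ one has $\partial_{x_i}\bar{F}_{L(x)}(t)=t^{-\kappa}\,\partial_{x_i}\ell_{L(x)}(t)$), and the monotone-density representation $f_{L(x)}(t)=\kappa t^{-\kappa-1}\ell_{L(x)}(t)$ from \eqref{4.06}; on the left-hand side I would insert $\mathbb{E}[L_i\,|\,L(x)=t]=t\ell_{L(x)}^i(t)$. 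The powers of $t$ cancel and one is left, as $t\to\infty$, with
\[
\ell_{L(x)}^i(t)=\frac{1}{\kappa}\,\frac{\partial}{\partial x_i}\ln\ell_{L(x)}(t),\qquad i=1,\dots,d.
\]

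Next I would multiply this relation by $x_i$, sum over $i$, and use the tower property $\sum_{i=1}^d x_i\,\mathbb{E}[L_i\,|\,L(x)=t]=\mathbb{E}[L(x)\,|\,L(x)=t]=t$, which gives $\sum_{i=1}^d x_i\ell_{L(x)}^i(t)=1$ and hence the Euler equation
\[
\sum_{i=1}^d x_i\,\frac{\partial}{\partial x_i}\ln\ell_{L(x)}(t)=\kappa,\qquad t\to\infty.
\]
Its general solution is $\ln\ell_{L(x)}(t)=\kappa\ln(x_1+\dots+x_d)$ plus a term homogeneous of degree zero in $x$; normalizing the slowly varying factor so that this free term vanishes yields \eqref{4.13}. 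Substituting \eqref{4.13} back into the displayed identity for $\ell_{L(x)}^i(t)$ gives \eqref{4.12}, and substituting it into $\bar{F}_{L(x)}(t)=t^{-\kappa}\ell_{L(x)}(t)$ and into \eqref{4.06} gives \eqref{4.14} and \eqref{4.15}.

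I expect the genuinely delicate point to be this last step: the Euler equation pins down $\ell_{L(x)}(t)$ only up to a factor that is homogeneous of degree zero in $x$ (and a priori still $t$-dependent), so collapsing it to the constant $1$ requires a normalization convention --- equivalently, the statement that for every weight vector $y$ on the unit simplex the one-dimensional tail $\bar{F}_{L(y)}(r)$ has slowly varying part converging to $1$. I would make this explicit; it is at least consistent with the exact scaling identity $\bar{F}_{L(\lambda x)}(t)=\bar{F}_{L(x)}(t/\lambda)$ for $\lambda>0$, which already forces $\ell_{L(x)}(t)$ to be asymptotically homogeneous of degree $\kappa$ in $x$. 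A secondary technical matter is to justify passing the derivative $\partial_{x_i}$ through the asymptotic equivalences for $\bar{F}_{L(x)}$ and $f_{L(x)}$, which I would handle using Assumptions~\ref{a.02}--\ref{a.03} and the uniform (Potter-type) bounds for regularly varying functions.
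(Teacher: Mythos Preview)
Your approach is essentially identical to the paper's: both specialize \eqref{3.02} to $\varrho(x)=0$, substitute the regular-variation and monotone-density representations to obtain $\partial_{x_i}\ln\ell_{L(x)}(t)=\kappa\,\ell_{L(x)}^i(t)$, impose the constraint $\sum_i x_i\,\ell_{L(x)}^i(t)=1$ from $\mathbb{E}[L(x)\mid L(x)=t]=t$, and read off the solution; the only cosmetic difference is that the paper derives the constraint first and the PDE second, whereas you reverse the order. Your caveat about the degree-zero-homogeneous freedom in the Euler equation is well taken and in fact more careful than the paper, which simply exhibits \eqref{4.21}--\eqref{4.22} as ``a solution'' of the system \eqref{4.20} without addressing uniqueness; the normalization issue you flag is therefore present in the original argument as well.
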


\begin{proof}
Since $\bar{F}_{L(x)}(t)=t^{-\kappa}\ell_{L(x)}(t)$ for $\kappa>1$ and the density $f_{L(x)}(t)$ is the ultimate monotone function, by monotone density theorem, the density has representation $f_{L(x)}(t)=\kappa t^{-(\kappa+1)}\ell_{L(x)}(t)$. If we multiply equation $\mathbb{E}\big[L_i\big|L(x)=t\big]=t\ell_{L(x)}^i(t)$ by $x_i$, sum it for $i=1,\dots,d$, and use a fact that $\mathbb{E}\big[L(x)\big|L(x)=t\big]=t$, then one obtains that
\begin{equation}\label{4.16}
\sum_{i=1}^d\ell_{L(x)}^i(t)x_i=1.
\end{equation}
According to equation \eqref{3.02}, we have
\begin{equation}\label{4.17}
\frac{\partial \bar{F}_{L(x)}(t)}{\partial x_i}=\mathbb{E}\big[L_i\big|L(x)=t\big]f_{L(x)}(t)
\end{equation}
for $i=1,\dots,d$. Therefore, it follows from representations of the tail probability, density function, and conditional expectation of the portfolio loss random variable $L(x)$ that
\begin{equation}\label{4.18}
t^{-\kappa}\frac{\partial \ell_{L(x)}(t)}{\partial x_i}=\Big(t\ell_{L(x)}^i(t)\Big)\times \Big(\kappa t^{-(\kappa+1)}\ell_{L(x)}(t)\Big)
\end{equation} 
for $i=1,\dots,d$. As a result, we find that
\begin{equation}\label{4.19}
\frac{\partial \ell_{L(x)}(t)}{\partial x_i}=\kappa\ell_{L(x)}^i(t)\ell_{L(x)}(t)
\end{equation} 
for $i=1,\dots,d$. Consequently, due to equation \eqref{4.16}, we reach the following problem for $\ell_{L(x)}(t)$ and $\ell_{L(x)}^1(t),\dots,\ell_{L(x)}^d(t)$:
\begin{equation}\label{4.20}
\begin{cases}
\dfrac{\partial \ln\big(\ell_{L(x)}(t)\big)}{\partial x_i}=\kappa\ell_{L(x)}^i(t)~~~\text{for}~i=1,\dots,d,\\
\ell_{L(x)}^1(t)x_1+\dots+\ell_{L(x)}^d(t)x_d=1.
\end{cases}
\end{equation}
One can easily check that a solution of the problem is given by the following equations
\begin{equation}\label{4.21}
\ell_{L(x)}^i(t)=\frac{1}{x_1+\dots+x_d}
\end{equation}
for $i=1,\dots,d$ and
\begin{equation}\label{4.22}
\ell_{L(x)}(t)=(x_1+\dots+x_d)^\kappa.
\end{equation}
Equations \eqref{4.14} and \eqref{4.15} follows from representations of the tail probability $\bar{F}_{L(x)}(t)$ and the density function $f_{L(x)}(t)$ and equation \eqref{4.13}.
\end{proof}

\begin{cor}\label{c.04}
Let conditions of Proposition \ref{p.05} hold. Then, the following results hold
\begin{equation}\label{4.23}
\mathbb{E}\big[L_i\big|L(x)=t\big]=\frac{t}{x_1+\dots+x_d},~~~t\to\infty
\end{equation}
and
\begin{equation}\label{4.24}
\mathbb{E}\big[L_i\big|L(x)\geq t\big]=\frac{\kappa}{\kappa-1}\frac{t}{x_1+\dots+x_d},~~~t\to\infty
\end{equation}
for $i=1,\dots,d$.
\end{cor}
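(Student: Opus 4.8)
The plan is to obtain both identities by substituting the explicit asymptotic forms established in Proposition \ref{p.05}. Equation \eqref{4.23} is immediate: by Proposition \ref{p.04} the conditional expectation has the representation \eqref{4.11}, namely $\mathbb{E}\big[L_i\big|L(x)=t\big]=t\,\ell_{L(x)}^i(t)$ as $t\to\infty$, while Proposition \ref{p.05} identifies the slowly varying factor through \eqref{4.21} as $\ell_{L(x)}^i(t)=1/(x_1+\dots+x_d)$; inserting the latter into the former gives \eqref{4.23} at once.

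For \eqref{4.24} I would write
\[
\mathbb{E}\big[L_i\big|L(x)\geq t\big]=\frac{\mathbb{E}\big[L_i1_{\{L(x)\geq t\}}\big]}{\mathbb{P}\big[L(x)\geq t\big]}
\]
and evaluate numerator and denominator with the asymptotics of Proposition \ref{p.05}. Conditioning on the value of $L(x)$ and using \eqref{4.23} together with the density asymptotics \eqref{4.15},
\[
\mathbb{E}\big[L_i1_{\{L(x)\geq t\}}\big]=\int_t^\infty \mathbb{E}\big[L_i\big|L(x)=z\big]f_{L(x)}(z)\,dz\sim\kappa(x_1+\dots+x_d)^{\kappa-1}\int_t^\infty z^{-\kappa}\,dz,
\]
and since $\kappa>1$ the last integral equals $t^{1-\kappa}/(\kappa-1)$. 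Dividing by $\mathbb{P}\big[L(x)\geq t\big]=\bar{F}_{L(x)}(t)\sim t^{-\kappa}(x_1+\dots+x_d)^\kappa$ from \eqref{4.14}, all powers of $t$ and of $x_1+\dots+x_d$ collapse to the claimed $\frac{\kappa}{\kappa-1}\frac{t}{x_1+\dots+x_d}$. Alternatively, one can reach the same conclusion by setting $\varrho(x)=0$ in equation \eqref{3.03}, which expresses $\mathbb{E}\big[L_i\big|L(x)\geq t\big]$ directly as $\bar{F}_{L(x)}(t)^{-1}\int_t^\infty \frac{\partial}{\partial x_i}\bar{F}_{L(x)}(z)\,dz$, and then substituting the tail asymptotics \eqref{4.14}.

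The one step that deserves care is passing from the pointwise asymptotic equivalence of the integrands to that of the integrals. This is legitimate because, after the rescaling $z=tu$, the integrand becomes a power $u^{-\kappa}$, integrable at $\infty$ precisely because $\kappa>1$, multiplied by a ratio of slowly varying functions that tends to $1$ uniformly on compacta, so dominated convergence — equivalently, Karamata's theorem — applies. No further obstacle arises; the hypothesis $\kappa>1$ carried over from Proposition \ref{p.05} is exactly what guarantees the finiteness of $\int_t^\infty z^{-\kappa}\,dz$ and hence the integrability of $L_i$ on the tail event.
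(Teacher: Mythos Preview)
Your argument is correct and essentially matches the paper's: for \eqref{4.23} the paper also just combines \eqref{4.11} with the identification of $\ell_{L(x)}^i$, and for \eqref{4.24} the paper does exactly your ``alternative'' route, setting $\varrho(x)=0$ in \eqref{3.03} and substituting the tail asymptotics \eqref{4.14} (via \eqref{4.25}--\eqref{4.27}). Your primary route for \eqref{4.24} differs only cosmetically, since by \eqref{4.17} the integrand $\mathbb{E}[L_i\mid L(x)=z]f_{L(x)}(z)$ is precisely $\partial_{x_i}\bar{F}_{L(x)}(z)$; your explicit invocation of Karamata to justify integrating the asymptotic is a welcome addition that the paper leaves implicit.
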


\begin{proof}
Equations \eqref{4.11} and \eqref{4.13} imply equation \eqref{4.23}. It follows from equation \eqref{4.14} that the partial derivative of the tail probability with respect to the argument $x_i$ is given by
\begin{equation}\label{4.25}
\frac{\partial \bar{F}_{L(x)}(t)}{\partial x_i}=t^{-\kappa}\kappa(x_1+\dots+x_d)^{\kappa-1}.
\end{equation}
Thus, one obtains that for $\kappa>1$,
\begin{equation}\label{4.26}
\int_{t(x)}^\infty \frac{\partial \bar{F}_{L(x)}(z)}{\partial x_i}dz= {t(x)}^{-(\kappa-1)}\frac{\kappa}{\kappa-1}(x_1+\dots+x_d)^{\kappa-1},~~~t\to\infty.
\end{equation}
Consequently, if take $\varrho(x)=0$ in equation \eqref{4.03}, then by equations \eqref{4.14}/\eqref{4.44} (with $\varrho(x)=t(x)$, see below) and \eqref{4.26}, we find that
\begin{equation}\label{4.27}
\mathbb{E}\big[L_i\big|L(x)\geq t(x)\big]=\frac{1}{\bar{F}_{L(x)}(t(x))}\int_{t(x)}^\infty\frac{\partial \bar{F}_{L(x)}(z)}{\partial x_i}dz=\frac{\kappa}{\kappa-1}\frac{t(x)}{x_1+\dots+x_d},~~~t\to\infty
\end{equation}
for $i=1,\dots,d$. Thus, by taking $t(x)=t$ in the above equation, one obtains equation \eqref{4.24}.
\end{proof}

Now we give some asymptotic results, which are related to the second order conditional moments of components of the jointly continuous random vector $(L_1,\dots,L_d)'$.

\begin{cor}\label{c.05}
Let $\kappa>2$ and conditions of Proposition \ref{p.05} hold. Then, the following results hold
\begin{equation}\label{4.28}
\mathbb{E}\big[L_iL_j\big|L(x)=t\big]= \frac{t^2}{(x_1+\dots+x_d)^2},~~~t\to\infty
\end{equation}
and
\begin{equation}\label{4.29}
\mathbb{E}\big[L_iL_j\big|L(x)\geq t\big]=  \frac{\kappa}{\kappa-2}\frac{t^2}{(x_1+\dots+x_d)^2},~~~t\to\infty
\end{equation}
for $i,j=1,\dots,d$.
\end{cor}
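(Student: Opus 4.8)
The plan is to mimic the route used for Corollary \ref{c.04}, but now starting from the formula for the partial derivative of $\mathbb{E}\big[L_j1_{\{H(x)\geq t\}}\big]$ in Proposition \ref{p.03} rather than from Proposition \ref{p.02}. Setting $\varrho(x)=0$, so that $H(x)=L(x)$ and $\partial\varrho(x)/\partial x_i=0$, equation \eqref{3.15} collapses to
\begin{equation}
\frac{\partial}{\partial x_i}\mathbb{E}\big[L_j1_{\{L(x)\geq t\}}\big]=f_{L(x)}(t)\,\mathbb{E}\big[L_iL_j\big|L(x)=t\big],
\end{equation}
which expresses the target conditional second moment through quantities whose asymptotics have already been pinned down in Proposition \ref{p.05} and Corollary \ref{c.04}.

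First I would compute the left--hand side asymptotically. Writing $\mathbb{E}\big[L_j1_{\{L(x)\geq t\}}\big]=\mathbb{E}\big[L_j\big|L(x)\geq t\big]\,\bar F_{L(x)}(t)$ and inserting \eqref{4.24} together with \eqref{4.14} gives, as $t\to\infty$,
\begin{equation}
\mathbb{E}\big[L_j1_{\{L(x)\geq t\}}\big]=\frac{\kappa}{\kappa-1}\,t^{1-\kappa}(x_1+\dots+x_d)^{\kappa-1}.
\end{equation}
Differentiating this in $x_i$ -- exactly the termwise differentiation of an asymptotic expression already used to pass from \eqref{4.18} to \eqref{4.19} -- produces $\kappa\,t^{1-\kappa}(x_1+\dots+x_d)^{\kappa-2}$, and dividing by $f_{L(x)}(t)=\kappa t^{-\kappa-1}(x_1+\dots+x_d)^\kappa$ from \eqref{4.15} yields \eqref{4.28}.

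For \eqref{4.29} I would use $\mathbb{E}\big[L_iL_j\big|L(x)\geq t\big]=\big(1/\bar F_{L(x)}(t)\big)\int_t^\infty \mathbb{E}\big[L_iL_j\big|L(x)=z\big]\,f_{L(x)}(z)\,dz$, substitute \eqref{4.28} and \eqref{4.15}, and note that the integrand is asymptotically $\kappa(x_1+\dots+x_d)^{\kappa-2}z^{1-\kappa}$; here the hypothesis $\kappa>2$ is precisely what makes $\int_t^\infty z^{1-\kappa}\,dz=t^{2-\kappa}/(\kappa-2)$ finite, so the integral equals $\frac{\kappa}{\kappa-2}(x_1+\dots+x_d)^{\kappa-2}t^{2-\kappa}$, and dividing by $\bar F_{L(x)}(t)=t^{-\kappa}(x_1+\dots+x_d)^\kappa$ gives \eqref{4.29}.

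The main obstacle is the same delicate point as in Propositions \ref{p.04}--\ref{p.05}: the relations \eqref{4.14}, \eqref{4.15}, \eqref{4.23}, \eqref{4.24} hold only in the $t\to\infty$ asymptotic sense, so differentiating the asymptotic form of $\mathbb{E}\big[L_j1_{\{L(x)\geq t\}}\big]$ in $x_i$, and integrating the asymptotic form of $\mathbb{E}\big[L_iL_j\big|L(x)=z\big]$ in $z$, requires that the slowly varying remainders be sufficiently well behaved (uniformity in $x$, together with the monotone density / ultimate monotonicity hypotheses of Proposition \ref{p.05}) -- the same tacit regularity the paper already invokes. Once that is granted the argument is a short substitution. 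If full rigor is wanted, it would be cleaner to first establish, by the argument of Proposition \ref{p.04} applied to the displayed identity above, that $\mathbb{E}\big[L_iL_j\big|L(x)=t\big]$ is regularly varying at $\infty$ with index $2$, write it as $t^2\ell^{ij}_{L(x)}(t)$, and only then identify the slowly varying factor via the multiplication--and--sum trick of Proposition \ref{p.05}, rather than differentiating asymptotic equalities directly.
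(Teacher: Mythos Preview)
Your proposal is correct and follows essentially the same route as the paper: both compute $\frac{\partial}{\partial x_i}\mathbb{E}\big[L_j1_{\{L(x)\geq t\}}\big]$ from the asymptotic forms \eqref{4.14}, \eqref{4.24}, \eqref{4.25} and then invoke the identity \eqref{3.15} (with $\varrho(x)=0$) to read off the conditional second moment, and for \eqref{4.29} both integrate this same quantity from $t$ to $\infty$ and divide by $\bar F_{L(x)}(t)$. The only cosmetic difference is that the paper routes \eqref{4.28} through \eqref{3.12}, first showing $\frac{1}{t^2}\mathrm{Cov}[L_i,L_j\,|\,L(x)=t]\to 0$ and then adding back $\mathbb{E}[L_i|L(x)=t]\,\mathbb{E}[L_j|L(x)=t]$, whereas you use \eqref{3.15} directly to get $\mathbb{E}[L_iL_j|L(x)=t]$ in one step; your caveat about differentiating and integrating asymptotic equalities is exactly the tacit regularity the paper relies on as well.
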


\begin{proof}
Observe that
\begin{equation}\label{4.30}
\frac{\partial }{\partial x_i}\mathbb{E}\big[L_j1_{\{L(x)\geq t\}}\big]=\frac{\partial \bar{F}_{L(x)}(t)}{\partial x_i}\mathbb{E}\big[L_j\big|L(x)\geq t\big]+\bar{F}_{L(x)}(t)\frac{\partial }{\partial x_i}\mathbb{E}\big[L_j\big|L(x)\geq t\big]
\end{equation}
If we substitute equations \eqref{4.14}, \eqref{4.24}, and \eqref{4.25} into the above equation, we get that
\begin{equation}\label{4.31}
\frac{\partial }{\partial x_i}\mathbb{E}\big[L_j1_{\{L(x)\geq t\}}\big]= t^{-k+1}\kappa(x_1+\dots+x_d)^{k-2},~~~t\to\infty.
\end{equation}
By equations \eqref{4.23} and \eqref{4.25}, one finds that
\begin{equation}\label{4.32}
\frac{\partial \bar{F}_{L(x)}(t)}{\partial x_i}\mathbb{E}\big[L_j\big|L(x)= t\big]= t^{-k+1}\kappa  (x_1+\dots+x_d)^{k-2},~~~t\to\infty.
\end{equation}
Therefore, it follows from equations \eqref{3.12}, \eqref{4.15}, \eqref{4.31}, and \eqref{4.32} that
\begin{equation}\label{4.33}
\frac{1}{t^2}\text{Cov}\big[L_i,L_j\big|L(x)=t\big]= 0,~~~t\to\infty.
\end{equation}
As a result, from equations \eqref{4.23} and \eqref{4.33}, one obtains equation \eqref{4.28}. If we take $\varrho(x)=0$ in equation \eqref{3.15}, integrate it from $t(x)$ to $\infty$, and divide $\bar{F}_{L(x)}(t(x))$, then we have
\begin{equation}\label{4.34}
\mathbb{E}\big[L_iL_j\big|L(x)\geq t(x)\big]=\frac{1}{\bar{F}_{L(x)}(t(x))}\int_{t(x)}^\infty \frac{\partial }{\partial x_i}\mathbb{E}\big[L_j1_{\{L(x)\geq z\}}\big]dz.
\end{equation}
By substituting equation \eqref{4.24} into the integral in the above equation, one gets that 
\begin{equation}\label{4.35}
\int_{t(x)}^\infty \frac{\partial }{\partial x_i}\mathbb{E}\big[L_j1_{\{L(x)\geq z\}}\big]dz= t(x)^{-k+2}\frac{\kappa}{\kappa-2}(x_1+\dots+x_d)^{-\kappa+2},~~~t\to\infty.
\end{equation}
Consequently, if we substitute equations \eqref{4.14} and \eqref{4.35} into equation \eqref{4.34}, then we obtain equation \eqref{4.29}, where we take $t(x)=t$. That completes the proof.
\end{proof}

\subsection{$t$ Depends on $x$}

Now we suppose that the variable $t$ in previous Subsection 4.1 equals some risk measure $\varrho(x)$, which depends on the variable $x$ and converges to $\infty$. Simple examples of the risk measures, which satisfy the condition $\varrho(x)\to\infty$ are VaR$_\alpha(x)$ and ES$_\alpha(x)$, as $\alpha\to 1$. Under the assumption, equation \eqref{4.04} becomes
\begin{equation}\label{4.36}
\bar{F}_{L(x)}(\varrho(x))=\varrho(x)^{-\kappa}\ell_{L(x)}(\varrho(x)), ~~~\varrho(x)\to\infty
\end{equation}
for $\kappa>0$, where $\ell_{L(x)}(t)$ is slowly varying function at $\infty$. Since the density function $f_{L(x)}(t)$ is the ultimately monotone function, by the monotone density theorem, the density function $f_{L(x)}(\varrho(x))$ is written by 
\begin{equation}\label{4.37}
f_{L(x)}(\varrho(x))=\kappa \varrho(x)^{-k-1}\ell_{L(x)}(\varrho(x)),~~~\varrho(x)\to\infty.
\end{equation}

Using the ideas in proof of Proposition \ref{p.04}, we prove the following Proposition, which is an analog of Proposition \ref{p.04}.
\begin{prop}\label{p.06}
Let tail probability of a random variable $L(x)$ is represented by $\bar{F}_{L(x)}(\varrho(x))=\varrho(x)^{-\kappa}\ell_{L(x)}(\varrho(x))$ for $\kappa>1$, where $\ell_{L(x)}(\varrho(x))$ is slowly varying function at $\infty$. Then, we have that
\begin{equation}\label{4.39}
\lim_{\varrho(x)\to\infty}\frac{\mathbb{E}\big[L_i\big|L(x)=\lambda \varrho(x)\big]}{\mathbb{E}\big[L_i\big|L(x)=\varrho(x)\big]}=\lambda.
\end{equation}
\end{prop}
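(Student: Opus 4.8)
The plan is to mirror the proof of Proposition \ref{p.04}, with the fixed variable $t$ there replaced by $\varrho(x)$ and with $\varrho(x)\to\infty$ (hence also $\lambda\varrho(x)\to\infty$) ensuring that every heavy-tail representation applies at both arguments. First I would take $\varrho(x)=0$ in equation \eqref{3.02}, so that $H(x)=L(x)$ and the left-hand side vanishes, giving the identity $\mathbb{E}\big[L_i\big|L(x)=s\big]=\frac{1}{f_{L(x)}(s)}\,\partial_{x_i}\bar{F}_{L(x)}(s)$, to be read as an equality of functions of the real variable $s$, the partial derivative being taken with $s$ held fixed and only afterwards evaluated at the point in question. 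Evaluating this identity once at $s=\lambda\varrho(x)$ and once at $s=\varrho(x)$ and dividing yields
\[
\frac{\mathbb{E}\big[L_i\big|L(x)=\lambda\varrho(x)\big]}{\mathbb{E}\big[L_i\big|L(x)=\varrho(x)\big]}=\frac{f_{L(x)}(\varrho(x))}{f_{L(x)}(\lambda\varrho(x))}\cdot\frac{\partial_{x_i}\bar{F}_{L(x)}(s)\big|_{s=\lambda\varrho(x)}}{\partial_{x_i}\bar{F}_{L(x)}(s)\big|_{s=\varrho(x)}}.
\]

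Next I would insert the asymptotic forms. By \eqref{4.36} and \eqref{4.37}, which are valid at both $\varrho(x)$ and $\lambda\varrho(x)$ since each tends to $\infty$, one has $\bar{F}_{L(x)}(s)=s^{-\kappa}\ell_{L(x)}(s)$ and $f_{L(x)}(s)=\kappa s^{-\kappa-1}\ell_{L(x)}(s)$; moreover, since the factor $s^{-\kappa}$ does not involve $x$, $\partial_{x_i}\bar{F}_{L(x)}(s)=s^{-\kappa}\,\partial_{x_i}\ell_{L(x)}(s)$. Substituting, the $f$-ratio produces $\lambda^{\kappa+1}$ times $\ell_{L(x)}(\varrho(x))/\ell_{L(x)}(\lambda\varrho(x))$ and the tail-derivative ratio produces $\lambda^{-\kappa}$ times $\partial_{x_i}\ell_{L(x)}(\lambda\varrho(x))/\partial_{x_i}\ell_{L(x)}(\varrho(x))$; the powers of $\lambda$ combine to $\lambda^{\kappa+1-\kappa}=\lambda$, leaving
\[
\frac{\mathbb{E}\big[L_i\big|L(x)=\lambda\varrho(x)\big]}{\mathbb{E}\big[L_i\big|L(x)=\varrho(x)\big]}=\lambda\,\frac{\ell_{L(x)}(\varrho(x))}{\ell_{L(x)}(\lambda\varrho(x))}\,\frac{\partial_{x_i}\ell_{L(x)}(\lambda\varrho(x))}{\partial_{x_i}\ell_{L(x)}(\varrho(x))}.
\]

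Finally I would pass to the limit $\varrho(x)\to\infty$. The ratio $\ell_{L(x)}(\varrho(x))/\ell_{L(x)}(\lambda\varrho(x))$ tends to $1$ directly from the slow-variation property \eqref{4.05}, and the analogous ratio of $x$-derivatives of $\ell_{L(x)}$ tends to $1$ as well; this last convergence --- equivalently, that $\partial_{x_i}\bar{F}_{L(x)}(s)$ is regularly varying in $s$ with index $-\kappa$, i.e.\ that $\partial_{x_i}\ell_{L(x)}(s)$ is slowly varying --- is precisely the implicit hypothesis already used in the proof of Proposition \ref{p.04}, and I expect it to be the only substantive point. Granting it, the two correction factors cancel and the quotient converges to $\lambda$, which is \eqref{4.39}. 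In the write-up the main thing to be careful about is the order of operations ``differentiate in $x_i$ first, then substitute $s=\varrho(x)$'', since $\varrho(x)$ itself depends on $x$ and $\partial_{x_i}\bar{F}_{L(x)}(s)\big|_{s=\varrho(x)}$ must not be confused with $\partial_{x_i}\big(\bar{F}_{L(x)}(\varrho(x))\big)$.
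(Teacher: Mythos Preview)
Your proposal is correct and follows essentially the same route as the paper: both mirror the proof of Proposition~\ref{p.04} by expressing $\mathbb{E}[L_i\mid L(x)=s]$ via equation~\eqref{3.02}, inserting the regular-variation forms for $\bar F_{L(x)}$ and $f_{L(x)}$, and invoking slow variation. The only cosmetic difference is that the paper, in its display~\eqref{4.40}, keeps $\varrho(x)$ in place and differentiates $\bar F_{L(x)}(\lambda\varrho(x))$ as a composite in $x$, producing extra $\partial_{x_i}\varrho(x)$ terms that later cancel, whereas you set $\varrho=0$ in~\eqref{3.02} first to obtain the identity in a free variable $s$ and only then substitute $s=\varrho(x),\lambda\varrho(x)$; the two computations are equivalent (by the chain rule the $\partial_{x_i}\varrho(x)$ terms drop out), and your explicit caution about ``differentiate in $x_i$ first, then substitute $s=\varrho(x)$'' is exactly the point that makes your version well-defined.
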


\begin{proof}
Thanks to equation \eqref{3.02}, we have that
\begin{equation}\label{4.40}
\frac{\mathbb{E}\big[L_i\big|L(x)=\lambda \varrho(x)\big]}{\mathbb{E}\big[L_i\big|L(x)=\varrho(x)\big]}=\frac{\lambda\frac{\partial}{\partial x_i}\varrho(x)+\frac{1}{f_{L(x)}(\lambda t)}\frac{\partial }{\partial x}\bar{F}_{L(x)}(\lambda t)}{\frac{\partial}{\partial x_i}\varrho(x)+\frac{1}{f_{L(x)}(t)}\frac{\partial }{\partial x}\bar{F}_{L(x)}(t)}.
\end{equation}
By substituting equations \eqref{4.04} and \eqref{4.06} into the above equation, one gets that
\begin{equation}\label{ad.02}
\frac{\mathbb{E}\big[L_i\big|L(x)=\lambda \varrho(x)\big]}{\mathbb{E}\big[L_i\big|L(x)=\varrho(x)\big]}=\lambda\frac{\frac{\partial}{\partial x_i}\varrho(x)+\frac{\varrho(x)^{\kappa+1}}{\kappa\ell_{L(x)}(\lambda \varrho(x))}\frac{\partial }{\partial x}\big\{\varrho(x)^{-\kappa}\ell_{L(x)}(\lambda \varrho(x))\big\}}{\frac{\partial}{\partial x_i}\varrho(x)+\frac{\varrho(x)^{\kappa+1}}{\kappa\ell_{L(x)}( \varrho(x))}\frac{\partial }{\partial x}\big\{\varrho(x)^{-\kappa}\ell_{L(x)}(\varrho(x))\big\}},~~~\varrho(x)\to\infty.
\end{equation}
According to equation \eqref{4.05}, one obtains
\begin{equation}\label{ad.03}
\lim_{\varrho(x)\to\infty}\frac{\mathbb{E}\big[L_i\big|L(x)=\lambda \varrho(x)\big]}{\mathbb{E}\big[L_i\big|L(x)=\varrho(x)\big]}=\lambda.
\end{equation}
Consequently, one gets equation \eqref{4.39}.
\end{proof}

\begin{prop}\label{p.07}
As $\varrho(x)\to\infty$, let tail probability of the portfolio random variable $L(x)$ is represented by $\bar{F}_{L(x)}(\varrho(x))=\varrho(x)^{-\kappa}\ell_{L(x)}(\varrho(x))$ for $\kappa>1$, for $i=1,\dots,d$, conditional expectations $\mathbb{E}\big[L_i\big|L(x)=\varrho(x)\big]$ are represented by 
\begin{equation}\label{4.41}
\mathbb{E}\big[L_i\big|L(x)=\varrho(x)\big]=\varrho(x)\ell_{L(x)}^i(\varrho(x)),
\end{equation}
and density $f_{L(x)}(t)$ be an ultimately monotone function, where $\ell_{L(x)}(t)$ and $\ell_{L(x)}^i(t)$ for $i=1,\dots,d$ are slowly varying function at $\infty$. Then, for the slowly varying functions, the following results hold
\begin{equation}\label{4.42}
\ell_{L(x)}^i(\varrho(x))=\frac{1}{x_1+\dots+x_d},~~~\varrho(x)\to\infty
\end{equation}
for $i=1,\dots,d$ and
\begin{equation}\label{4.43}
\ell_{L(x)}(\varrho(x))=(x_1+\dots+x_d)^\kappa,~~~\varrho(x)\to\infty.
\end{equation}
Consequently, we have that
\begin{equation}\label{4.44}
\bar{F}_{L(x)}(\varrho(x))=\varrho(x)^{-\kappa}(x_1+\dots+x_d)^\kappa,~~~\varrho(x)\to\infty
\end{equation}
and
\begin{equation}\label{4.45}
f_{L(x)}(\varrho(x))=\varrho(x)^{-\kappa-1}\kappa(x_1+\dots+x_d)^\kappa,~~~\varrho(x)\to\infty.
\end{equation}
\end{prop}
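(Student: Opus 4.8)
The plan is to mirror the proof of Proposition~\ref{p.05} step by step, with the fixed level $t$ replaced everywhere by the diverging risk measure $\varrho(x)$, using Proposition~\ref{p.06} in place of the plain regular variation in $t$ that powers Proposition~\ref{p.05}, and passing to the limit $\varrho(x)\to\infty$ only at the end.

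First I would record the inputs. Because $\bar{F}_{L(x)}(\varrho(x))=\varrho(x)^{-\kappa}\ell_{L(x)}(\varrho(x))$ with $\kappa>1$ and $f_{L(x)}$ is ultimately monotone, the monotone density theorem gives \eqref{4.37}, i.e.\ $f_{L(x)}(\varrho(x))=\kappa\,\varrho(x)^{-\kappa-1}\ell_{L(x)}(\varrho(x))$. Then, multiplying the representation \eqref{4.41} by $x_i$, summing over $i=1,\dots,d$, and using $\sum_{i=1}^d x_i\,\mathbb{E}\big[L_i\,\big|\,L(x)=\varrho(x)\big]=\mathbb{E}\big[L(x)\,\big|\,L(x)=\varrho(x)\big]=\varrho(x)$ yields the normalisation $\sum_{i=1}^d\ell_{L(x)}^i(\varrho(x))\,x_i=1$ as $\varrho(x)\to\infty$.

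The core step is the companion partial differential relation. Applying \eqref{3.02} with the trivial risk measure (so that $H(x)=L(x)$) at a \emph{fixed} level $s$ gives $\frac{\partial}{\partial x_i}\bar{F}_{L(x)}(s)=\mathbb{E}\big[L_i\,\big|\,L(x)=s\big]\,f_{L(x)}(s)$; substituting the three regular--variation asymptotics at $s$ and cancelling the common factor $s^{-\kappa}$ leaves $\frac{\partial}{\partial x_i}\ln\ell_{L(x)}(s)=\kappa\,\ell_{L(x)}^i(s)$ as $s\to\infty$. Letting $s$ run to infinity along the path $\varrho(x)$ and adjoining the normalisation produces the system
\[
\frac{\partial}{\partial x_i}\ln\ell_{L(x)}(\varrho(x))=\kappa\,\ell_{L(x)}^i(\varrho(x)),\qquad \sum_{i=1}^d\ell_{L(x)}^i(\varrho(x))\,x_i=1,
\]
which is precisely the one solved in Proposition~\ref{p.05}. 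One checks directly that $\ell_{L(x)}^i(\varrho(x))=1/(x_1+\dots+x_d)$ and $\ell_{L(x)}(\varrho(x))=(x_1+\dots+x_d)^\kappa$ satisfy both equations, giving \eqref{4.42}--\eqref{4.43}; inserting \eqref{4.43} into the tail and density representations then gives \eqref{4.44} and \eqref{4.45}.

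The step I expect to be the main obstacle is this passage from the fixed level $s$ to $s=\varrho(x)$, since \eqref{3.02} is valid only for a level that does not depend on $x$, whereas $\varrho(x)$ does. I would handle it exactly as in the proof of Proposition~\ref{p.06}: establish the differential relation at a generic fixed $s$ first and only afterwards specialise $s=\varrho(x)\to\infty$, the admissibility of which is what Proposition~\ref{p.06} secures. The care needed is (a)~to check that the correction terms a direct differentiation through $\varrho(x)$ would introduce are negligible in the limit --- informally they carry an extra derivative of the slowly varying function $\ell_{L(x)}$ --- and (b)~to check that \eqref{4.41} and the regular--variation representations hold uniformly enough along the path $\varrho(x)\to\infty$ for the substitution to be legitimate.
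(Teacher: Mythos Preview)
Your plan is essentially the paper's proof: derive the normalisation constraint $\sum_i x_i\ell_{L(x)}^i(\varrho(x))=1$ from \eqref{4.41}, combine it with the differential relation $\partial_{x_i}\ln\ell_{L(x)}=\kappa\ell_{L(x)}^i$ coming from \eqref{3.02}, and solve the resulting system \eqref{4.20}.

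The one genuine difference is in how the differential relation is obtained. You apply \eqref{3.02} with the trivial risk measure at a fixed level $s$ and then substitute $s=\varrho(x)$, which forces you to worry about the passage from fixed $s$ to the $x$-dependent $\varrho(x)$ and to invoke Proposition~\ref{p.06}. The paper instead applies \eqref{3.02} directly with the given $\varrho(x)$ at $t=0$, obtaining
\[
\frac{\partial \bar F_{L(x)}(\varrho(x))}{\partial x_i}=\Big(\mathbb{E}\big[L_i\,\big|\,L(x)=\varrho(x)\big]-\frac{\partial\varrho(x)}{\partial x_i}\Big)f_{L(x)}(\varrho(x)),
\]
where the left-hand side is the full derivative of the composite $x\mapsto\bar F_{L(x)}(\varrho(x))$. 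Upon inserting the representations \eqref{4.36}, \eqref{4.37}, \eqref{4.41}, the term $-\kappa\varrho(x)^{-\kappa-1}\ell_{L(x)}(\varrho(x))\,\partial_{x_i}\varrho(x)$ produced by differentiating $\varrho(x)^{-\kappa}$ cancels \emph{exactly} against the $-\partial_{x_i}\varrho(x)\,f_{L(x)}(\varrho(x))$ term, leaving \eqref{4.48}. So the ``main obstacle'' you anticipate simply does not arise: no asymptotic justification of the substitution is needed, and Proposition~\ref{p.06} is not used in the paper's argument. Your route works too, but the direct application of \eqref{3.02} at $t=0$ is cleaner and removes the caveats (a) and (b) you list.
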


\begin{proof}
By multiplying equation conditional expectation \eqref{4.41} by $x_i$, summing it for $i=1,\dots,d$, and using the fact that $\mathbb{E}\big[L(x)\big|L(x)=\varrho(x)\big]=\varrho(x)$, we find that
\begin{equation}\label{4.46}
\sum_{i=1}^d\ell_{H(x)}^i(\varrho(x))x_i= 1.
\end{equation}
By equation \eqref{2.02} for $t=0$, we get that
\begin{equation}\label{4.47}
\frac{\partial \bar{F}_{L(x)}(\varrho(x))}{\partial x_i}=\bigg(\mathbb{E}\big[L_i\big|L(x)=\varrho(x)\big]-\frac{\partial \varrho(x)}{\partial x_i}\bigg) f_{L(x)}(\varrho(x))
\end{equation}
for $i=1,\dots,d$. If we substitute the tail probability \eqref{4.36}, conditional expectation \eqref{4.41}, and density \eqref{4.37} into equation \eqref{4.47}, then we obtain the same equation as equation \eqref{4.19}, namely,
\begin{equation}\label{4.48}
\frac{\partial \ell_{H(x)}(\varrho(x))}{\partial x_i}=\kappa\ell_{H(x)}^i(\varrho(x))\ell_{H(x)}(\varrho(x))
\end{equation} 
for $i=1,\dots,d$. By taking into account the constraint equation \eqref{4.46}, one gets a system of equations as equation \eqref{4.20}. As a result, we obtain equations \eqref{4.42} and \eqref{4.43}. From the representations \eqref{4.36} and \eqref{4.37} and equation \eqref{4.43}, one gets equations \eqref{4.44} and \eqref{4.45}.
\end{proof}

The following Corollary is a direct consequence of Proposition \ref{p.07}.
\begin{cor}\label{c.06}
Let conditions of Proposition \ref{p.07} hold. Then, the following results hold
\begin{equation}\label{4.49}
\mathbb{E}\big[L_i\big|L(x)=\varrho(x)\big]=\frac{\varrho(x)}{x_1+\dots+x_d},~~~\varrho(x)\to\infty
\end{equation}
and
\begin{equation}\label{4.50}
\mathbb{E}\big[L_i\big|L(x)\geq \varrho(x)\big]=\frac{\kappa}{\kappa-1}\frac{\varrho(x)}{x_1+\dots+x_d},~~~\varrho(x)\to\infty
\end{equation}
for $i=1,\dots,d$.
\end{cor}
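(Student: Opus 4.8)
The plan is to follow the blueprint of the proof of Corollary \ref{c.04} almost verbatim, with the fixed threshold $t(x)$ there replaced by the $x$-dependent risk measure $\varrho(x)$; no genuinely new idea is needed beyond the asymptotic bookkeeping. Equation \eqref{4.49} is immediate: substituting the identification \eqref{4.42} of the slowly varying function $\ell_{L(x)}^i$ (equivalently \eqref{4.43}) into the representation \eqref{4.41} of the conditional expectation gives $\mathbb{E}\big[L_i\big|L(x)=\varrho(x)\big]=\varrho(x)\,\ell_{L(x)}^i(\varrho(x))=\varrho(x)/(x_1+\dots+x_d)$ as $\varrho(x)\to\infty$.

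For equation \eqref{4.50} I would start from Proposition \ref{p.02}. Apply equation \eqref{3.03} to the (trivially positive homogeneous) risk measure identically equal to $0$, so that $H(x)=L(x)$ and $\partial\varrho/\partial x_i\equiv 0$, and with the possibly $x$-dependent threshold taken to be $t(x)=\varrho(x)$ (this is permitted by the remark following Proposition \ref{p.02}). This yields
\[
\mathbb{E}\big[L_i\big|L(x)\geq \varrho(x)\big]=\frac{1}{\bar{F}_{L(x)}(\varrho(x))}\int_{\varrho(x)}^\infty \frac{\partial \bar{F}_{L(x)}(z)}{\partial x_i}\,dz.
\]
Then I would feed in the asymptotic tail representation. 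By Proposition \ref{p.07} (equation \eqref{4.44}) the slowly varying function of $\bar{F}_{L(x)}$ equals $(x_1+\dots+x_d)^\kappa$, and since this slowly varying function is a property of the law of $L(x)$ and does not depend on the argument at which the tail is evaluated, the representation $\bar{F}_{L(x)}(z)=z^{-\kappa}(x_1+\dots+x_d)^\kappa$ holds for every $z\to\infty$, in particular for all $z\geq \varrho(x)$ (this is the same identity as \eqref{4.14}). Differentiating in $x_i$, $\frac{\partial}{\partial x_i}\bar{F}_{L(x)}(z)=\kappa z^{-\kappa}(x_1+\dots+x_d)^{\kappa-1}$; integrating from $\varrho(x)$ to $\infty$, which is legitimate because $\kappa>1$, gives $\frac{\kappa}{\kappa-1}\varrho(x)^{-(\kappa-1)}(x_1+\dots+x_d)^{\kappa-1}$; dividing by $\bar{F}_{L(x)}(\varrho(x))=\varrho(x)^{-\kappa}(x_1+\dots+x_d)^\kappa$ and simplifying produces $\frac{\kappa}{\kappa-1}\frac{\varrho(x)}{x_1+\dots+x_d}$, which is \eqref{4.50}.

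The main obstacle is the asymptotic justification rather than any algebraic difficulty: because $\varrho(x)$ depends on $x$ one must control, first, the interchange of derivative and integral in \eqref{3.03} (covered by Assumptions \ref{a.02} and \ref{a.03}), and second, the substitution of the regular-variation representation of $\bar{F}_{L(x)}$ and of its $x_i$-derivative under the integral sign over the unbounded range $[\varrho(x),\infty)$. The latter needs the convergence in \eqref{4.05} to be uniform enough (for instance via Potter-type bounds for regularly varying functions) that the $o(1)$ errors integrate to a quantity negligible relative to the leading term; since $\varrho(x)\to\infty$ this is exactly the regime in which such uniform estimates hold. One should also record the consistency point, already used implicitly in Corollary \ref{c.04}, that the slowly varying function identified in Proposition \ref{p.07} coincides with the one in Proposition \ref{p.05}, which is what allows \eqref{4.14} to be invoked for a generic argument inside the integral.
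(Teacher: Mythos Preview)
Your proposal is correct and follows essentially the same route as the paper. For \eqref{4.49} you combine \eqref{4.41} with \eqref{4.42}, exactly as the paper does; for \eqref{4.50} the paper simply says ``take $t(x)=\varrho(x)$ in equation \eqref{4.27}'', and your argument is precisely a re-derivation of \eqref{4.27} in that specialization (apply \eqref{3.03} with the zero risk measure and threshold $t(x)=\varrho(x)$, then plug in the tail asymptotics \eqref{4.44}). Your extra discussion of Potter-type bounds and uniformity is more careful than the paper, which treats these substitutions as formal identities.
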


\begin{proof}
Equation \eqref{4.49} follows from equations \eqref{4.41} and \eqref{4.42}. If we take $t(x)=\varrho(x)$ in equation \eqref{4.27}, then we get equation \eqref{4.50}.
\end{proof}

It will be interesting to consider ratios between the risk measures ES$_{\alpha}(x)$ and VaR$_{\alpha}(x)$ and their partial derivatives as quantile probability $\alpha$ converges to 1. By taking $\varrho(x)=q_\alpha(x)$ in Corollary \ref{c.06}, we get that
\begin{equation}\label{4.51}
\mathbb{E}\big[L_i\big|L(x)=q_\alpha(x)\big]=\frac{q_\alpha(x)}{x_1+\dots+x_d},~~~\alpha\to 1
\end{equation}
and
\begin{equation}\label{4.52}
\mathbb{E}\big[L_i\big|L(x)\geq q_\alpha(x)\big]=\frac{\kappa}{\kappa-1}\frac{q_\alpha(x)}{x_1+\dots+x_d},~~~\alpha\to 1
\end{equation}
for $i=1,\dots,d$. Since $\frac{\partial}{\partial x_i}\mathrm{VaR}_\alpha(x)=\frac{\partial}{\partial x_i}q_\alpha(x)=\mathbb{E}\big[L_i\big|L(x)=q_\alpha(x)\big]$ and $\frac{\partial}{\partial x_i}\mathrm{ES}_\alpha(x)=\mathbb{E}\big[L_i\big|L(x)\geq q_\alpha(x)\big]$, we have that
\begin{equation}\label{4.53}
\lim_{\alpha\to 1}\frac{\frac{\partial}{\partial x_i}\mathrm{ES}_\alpha(x)}{\frac{\partial}{\partial x_i}\mathrm{VaR}_\alpha(x)}=\frac{\kappa}{\kappa-1},~~~i=1,\dots,d.
\end{equation}
It also follows from Corollary \ref{c.06} that
\begin{equation}\label{4.54}
\frac{\partial}{\partial x_i}\mathrm{ES}_\alpha(x)=\frac{\kappa}{\kappa-1}{\frac{\partial}{\partial x_i}\mathrm{VaR}_\alpha(x)},~~~\alpha\to 1.
\end{equation}
If we multiply it $x_i$ and sum it for $i=1,\dots,d$, we obtain that
\begin{equation}\label{4.55}
\lim_{\alpha\to 1}\frac{\mathrm{ES}_\alpha(x)}{\mathrm{VaR}_\alpha(x)}=\frac{\kappa}{\kappa-1}
\end{equation}
for $\kappa>1$. Note that \citeA{McNeil15} show that the last equation holds for generalized Pareto distribution. Also, one can show that the formula holds for the student $t$ distribution. It is worth mentioning that for any risk measure $\varrho(x)$, which satisfies the condition $\varrho(x)\to\infty$, an exactly as same results as equation \eqref{4.53} and \eqref{4.55} hold, i.e., for $\kappa>1$,
\begin{equation}\label{4.56}
\lim_{\varrho(x)\to\infty} \frac{\mathbb{E}\big[L_i\big|L(x)\geq \varrho(x)\big]}{\mathbb{E}\big[L_i\big|L(x)= \varrho(x)\big]}=\frac{\kappa}{\kappa-1}
\end{equation}
and
\begin{equation}\label{4.57}
\lim_{\varrho(x)\to\infty} \frac{\mathbb{E}\big[L(x)\big|L(x)\geq \varrho(x)\big]}{\mathbb{E}\big[L(x)\big|L(x)= \varrho(x)\big]}=\frac{\kappa}{\kappa-1}.
\end{equation}

Now we give some results, which are related to asymptotic conditional covariances between components of the jointly continuous random vector $(L_1,\dots,L_d)'$.

\begin{cor}\label{c.07}
Let $\kappa>2$ and conditions of Proposition \ref{p.07} hold. Then, the following results hold
\begin{equation}\label{4.58}
\mathbb{E}\big[L_iL_j\big|L(x)=q_\alpha(x)\big]= \frac{q_\alpha^2(x)}{(x_1+\dots+x_d)^2},~~~\alpha\to 1
\end{equation}
and
\begin{equation}\label{4.59}
\mathbb{E}\big[L_iL_j\big|L(x)\geq q_\alpha(x)\big]= \frac{\kappa}{\kappa-2}\frac{q_\alpha^2(x)}{(x_1+\dots+x_d)^2},~~~\alpha\to 1
\end{equation}
for $i,j=1,\dots,d$.
\end{cor}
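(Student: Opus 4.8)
The plan is to repeat the argument of Corollary~\ref{c.05} almost verbatim, with the fixed threshold replaced by the $x$--dependent level $q_\alpha(x)$ and with Proposition~\ref{p.07} and Corollary~\ref{c.06} playing the roles that Proposition~\ref{p.05} and Corollary~\ref{c.04} played there. Throughout I would take $\varrho(x)=0$ in the formulas of Section~3 (so that $H(x)=L(x)$), with $q_\alpha(x)$ serving as the threshold, and I would keep the threshold $z$ a free variable for as long as possible, substituting $z=q_\alpha(x)$ only at the very end; this is legitimate precisely because $q_\alpha(x)\to\infty$ as $\alpha\to1$, so any relation established ``as $z\to\infty$'' applies along the family $z=q_\alpha(x)$.

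First I would show that the conditional covariance is negligible relative to $q_\alpha^2(x)$. Starting from equation~\eqref{3.12} with $\varrho(x)=0$, I would solve for $\mathrm{Cov}\big[L_i,L_j\big|L(x)=z\big]$ and substitute the power--law asymptotics \eqref{4.44}--\eqref{4.45} for $\bar{F}_{L(x)}(z)$ and $f_{L(x)}(z)$, the expression \eqref{4.49} for $\mathbb{E}\big[L_j\big|L(x)=z\big]$, the differentiated tail $\frac{\partial}{\partial x_i}\bar{F}_{L(x)}(z)=\kappa z^{-\kappa}(x_1+\dots+x_d)^{\kappa-1}$ obtained from \eqref{4.44} as in \eqref{4.25}, and the asymptotics of $\frac{\partial}{\partial x_i}\mathbb{E}\big[L_j1_{\{L(x)\geq z\}}\big]$ obtained exactly as in \eqref{4.30}--\eqref{4.31} from the product rule together with \eqref{4.44}, \eqref{4.50} and the differentiated tail. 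As in \eqref{4.32}--\eqref{4.33}, the two resulting contributions have the same leading order $\kappa z^{-\kappa+1}(x_1+\dots+x_d)^{\kappa-2}$ and cancel, so $z^{-2}\mathrm{Cov}\big[L_i,L_j\big|L(x)=z\big]\to0$. Setting $z=q_\alpha(x)$ and combining with the identity $\mathbb{E}\big[L_iL_j\big|L(x)=q_\alpha(x)\big]=\mathrm{Cov}\big[L_i,L_j\big|L(x)=q_\alpha(x)\big]+\mathbb{E}\big[L_i\big|L(x)=q_\alpha(x)\big]\mathbb{E}\big[L_j\big|L(x)=q_\alpha(x)\big]$ and \eqref{4.49} then yields \eqref{4.58}.

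For \eqref{4.59} I would take $\varrho(x)=0$ in equation~\eqref{3.15}, integrate in $z$ from $q_\alpha(x)$ to $\infty$, and divide by $\bar{F}_{L(x)}(q_\alpha(x))$, exactly as in \eqref{4.34}, so that $\mathbb{E}\big[L_iL_j\big|L(x)\geq q_\alpha(x)\big]$ is represented as $\bar{F}_{L(x)}(q_\alpha(x))^{-1}\int_{q_\alpha(x)}^\infty\frac{\partial}{\partial x_i}\mathbb{E}\big[L_j1_{\{L(x)\geq z\}}\big]\,dz$. Plugging in the asymptotics $\frac{\partial}{\partial x_i}\mathbb{E}\big[L_j1_{\{L(x)\geq z\}}\big]=\kappa z^{-\kappa+1}(x_1+\dots+x_d)^{\kappa-2}$ from the previous step, using $\kappa>2$ to evaluate the integral as in \eqref{4.35}, and dividing by $\bar{F}_{L(x)}(q_\alpha(x))=q_\alpha(x)^{-\kappa}(x_1+\dots+x_d)^\kappa$ from \eqref{4.44} gives exactly \eqref{4.59}.

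The main obstacle is the one recurring throughout this section: the asymptotic identities are derived at a fixed level $z$ but then applied at the moving level $z=q_\alpha(x)$, so one must check that the $o(\cdot)$ error terms are uniform enough for the substitution to be legitimate --- this is where the ultimate monotonicity of $f_{L(x)}$ and the slow variation of $\ell_{L(x)}$ and $\ell_{L(x)}^i$, via the uniform convergence theorem for regularly varying functions, are needed --- and one must justify interchanging the integral over $[q_\alpha(x),\infty)$ with the asymptotic expansion, which is covered by the integrability hypothesis carried over from Proposition~\ref{p.03}. A secondary subtlety is that differentiating quantities evaluated at $q_\alpha(x)$ produces the term $\frac{\partial}{\partial x_i}q_\alpha(x)$, which by the relation $\frac{\partial}{\partial x_i}q_\alpha(x)=\mathbb{E}\big[L_i\big|L(x)=q_\alpha(x)\big]$ together with \eqref{4.49} equals $q_\alpha(x)/(x_1+\dots+x_d)$ asymptotically; since this is of the same order as the leading terms it must be tracked carefully through the cancellations.
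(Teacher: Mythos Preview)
Your proposal is correct, and for \eqref{4.59} it coincides exactly with the paper's argument: take $t(x)=q_\alpha(x)$ in \eqref{4.34}, insert the asymptotics \eqref{4.31}/\eqref{4.35}, and divide by \eqref{4.44}.

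For \eqref{4.58} the paper follows a slightly different path. Instead of setting $\varrho(x)=0$ and working at a free threshold $z$, it sets $\varrho(x)=q_\alpha(x)$ in \eqref{3.12} with $t=0$; then $\bar F_{H(x)}(0)=1-\alpha$ is constant in $x$, so the term $\mathbb{E}[L_j\mid H(x)=0]\,\partial\bar F_{H(x)}(0)/\partial x_i$ drops out at once and one obtains the clean identity \eqref{4.60}. The price is that the remaining derivative $\frac{\partial}{\partial x_i}\mathbb{E}[L_j1_{\{L(x)\ge q_\alpha(x)\}}]$ must be computed with $q_\alpha(x)$ varying in $x$, which via the chain rule produces exactly the $\partial q_\alpha(x)/\partial x_i$ term in \eqref{4.61} that you flag as a ``secondary subtlety.'' In your own free-threshold route that term never appears, since all differentiation is done at fixed $z$ before substituting $z=q_\alpha(x)$; so the subtlety you raise is a concern for the paper's variant, not for yours. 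Your route is the more direct transplant of Corollary~\ref{c.05}, while the paper's buys a one-line simplification of \eqref{3.12} at the cost of tracking $\partial q_\alpha/\partial x_i$ afterwards.
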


\begin{proof}
If we take $\varrho(x)=q_\alpha(x)$ in equation \eqref{3.12}, then since $\frac{\partial}{\partial x_i}\bar{F}_{L(x)}(q_\alpha(x))=0$, we have that
\begin{equation}\label{4.60}
\text{Cov}\big[L_i,L_j\big|L(x)=q_\alpha(x)\big]=\frac{1}{f_{L(x)}(q_\alpha(x))}\frac{\partial}{\partial x_i}\mathbb{E}\big[L_j1_{\{L(x)\geq q_\alpha(x)\}}\big].
\end{equation}
Due to equations \eqref{4.30} (whose $t$ is replaced $\varrho(x)$), \eqref{4.44}, and \eqref{4.50}, the partial derivative in the above equation is represented by
\begin{equation}\label{4.61}
\frac{\partial}{\partial x_i}\mathbb{E}\big[L_j1_{\{L(x)\geq q_\alpha(x)\}}\big]= -q_\alpha^{-\kappa}\frac{\partial q_\alpha(x)}{\partial x_i}\kappa (x_1+\dots+x_d)^{\kappa-1}+q_\alpha^{-\kappa+1}\kappa (x_1+\dots+x_d)^{\kappa-2}.
\end{equation}
Therefore, by equations \eqref{4.37}, \eqref{4.51}, \eqref{4.59}, and \eqref{4.61}, $\frac{1}{q_\alpha^2(x)}\text{Cov}\big[L_i,L_j\big|L(x)=q_\alpha(x)\big]= 0$, as $\alpha\to 1$. Consequently, equation \eqref{4.51} implies equation \eqref{4.58}. On the other hand, by taking $t=q_\alpha(x)$ in equation \eqref{4.34} and taking account equations \eqref{4.35} and \eqref{4.36}, we obtain equation \eqref{4.59}.
\end{proof}

From the above Corollary, one can easily conclude that for $\kappa>2$ and $i,j=1,\dots,d$,
\begin{equation}\label{4.62}
\lim_{\alpha\to 1}\frac{\mathbb{E}\big[L_iL_j\big|L(x)\geq q_\alpha(x)\big]}{\mathbb{E}\big[L_iL_j\big|L(x)= q_\alpha(x)\big]}= \frac{\kappa}{\kappa-2}.
\end{equation}
If we multiply equations \eqref{4.58} and \eqref{4.59} by $x_i$ and $x_j$ and sum it for all $i,j=1,\dots,d$, then for the second order conditional moments of the portfolio loss random variable, we obtain that
\begin{equation}\label{4.63}
\lim_{\alpha\to 1}\frac{\mathbb{E}\big[L^2(x)\big|L(x)\geq q_\alpha(x)\big]}{\mathbb{E}\big[L^2(x)\big|L(x)= q_\alpha(x)\big]}= \frac{\kappa}{\kappa-2},~~~\kappa>2.
\end{equation}
Following the ideas in the paper, for higher order conditional moments of the portfolio loss random variable, one may obtain similar formulas as equations \eqref{4.57} and \eqref{4.63}. Also, it follows from Corollary \ref{c.07} and equation \eqref{4.50} that for $\kappa>2$ and $i,j=1,\dots,d$, asymptotic conditional covariances of the loss random variables are given by
\begin{equation}\label{4.64}
\text{Cov}\big[L_i,L_j\big|L(x)\geq q_\alpha(x)\big]= \frac{\kappa}{(\kappa-2)(\kappa-1)^2}\frac{q_\alpha^2(x)}{(x_1+\dots+x_d)^2},~~~\alpha\to 1.
\end{equation}
As a result, for $\kappa>2$ and $i,j=1,\dots,d$, limits of conditional correlations of the loss random variables are given by
\begin{equation}\label{4.65}
\lim_{\alpha\to 1}\text{Corr}\big[L_i,L_j\big|L(x)\geq q_\alpha(x)\big]= 1.
\end{equation}

\section{Conclusion}

Here, we summarize the main findings of the paper: 
\begin{itemize}
\item[($i$)] For discrete portfolio loss random variable, we obtain that 
\begin{itemize}
\item the first order derivatives of any risk measure equal conditional expectations for given equality conditions, see equation \eqref{2.15}
\item and higher order derivatives of the risk measure are equal to zero, see equation \eqref{2.16}.
\end{itemize}
\item[($ii$)] For absolute continuous portfolio loss random variable, we introduce that
\begin{itemize}
\item three representation of the first order derivatives of any risk measure, see equations \eqref{3.02}, \eqref{3.03}, and \eqref{3.11},
\item a representation of positive homogeneous risk measure, see equation \eqref{3.08},
\item and the second order derivatives of any risk measure, see equation \eqref{3.14}.
\end{itemize}
\item[($iii$)] For heavy--tailed portfolio loss random variable, we obtain that
\begin{itemize}
\item representations of tail probability and density function of the portfolio loss random variable, see equations \eqref{4.14}, \eqref{4.15}, \eqref{4.44}, and \eqref{4.45},
\item the first order moments for given equality and inequality condition, see equations \eqref{4.23}, \eqref{4.24}, \eqref{4.49}, and \eqref{4.50},
\item the asymptotic ratio of ES to VaR also holds for any risk measure, which satisfies $\varrho(x)\to\infty$, see equation \eqref{4.57},
\item the second order moments for given equality and inequality condition, see equations \eqref{4.28}, \eqref{4.29}, \eqref{4.58}, and \eqref{4.59},
\item asymptotic ratio of the second order conditional moments of the portfolio loss random variable, see equation \eqref{4.63},
\item and limits of conditional correlations of loss random variables equal to 1, see equation \eqref{4.65}.
\end{itemize}
\end{itemize}
It is worth mentioning that the formulas in the papers are very simple. 

\section*{Acknowledgments}
We thank to National University of Mongolia for a grant that supported this study. This research was funded by National University of Mongolia under grant agreement (P2023-4568).

\bibliographystyle{apacite}
\bibliography{References}

\end{document}